\def\eps{\epsilon}
\def\StateList{\{\text{Sym},\text{SP},\text{VP},\text{SVP}\}}
\newcommand\utimes{\mathbin{\ooalign{$\sqcup$\cr \hfil\raise0.42ex\hbox{$\scriptscriptstyle\times$}\hfil\cr}}}
\def\Nel{N_{\text{el}}}
\def\eF{E_{\text{F}}}
\def\diag{\mathrm{diag}}
\DeclareMathOperator\arcsinh{arcsinh}
\DeclareMathOperator\arctanh{arctanh}
\DeclareMathOperator\Kell{K}
\DeclareMathOperator\Eell{E}
\def\kF{k_{\text{F}}}
\def\Ksc{\mathscr{K}}
\def\RPA{\text{RPA}}
\def\kk{\text{k}}
\def\rr{\text{r}}
\def\vv{\text{v}}
\def\ss{\text{s}}
\def\xx{\text{x}}
\def\cc{\text{c}}
\def\xc{\text{xc}}
\def\xr{\text{xr}}
\def\nel{n_{\rm el}}
\def\ee{\mathrm{e}}
\def\ii{\mathrm{i}}
\def\rs{r_\mathrm{s}}
\def\Area{\text{A}}
\def\aB{a_{\mathrm{B}}}
\def\Dsc{\mathscr{D}}
\def\Esc{\mathscr{E}}
\def\Fsc{\mathscr{F}}
\def\Xsc{\mathscr{X}}
\def\Vsc{\mathscr{V}}
\def\Qsc{\mathscr{Q}}
\begin{document}

\title{
Theory of Coulomb driven nematicity in a multi-valley two-dimensional electron gas
}

\date{\today}
\author{Vladimir Calvera}
\affiliation{Department of Physics, Stanford University, Stanford, CA 94305, USA}
\author{Agnes Valenti}
\affiliation{Institute for Theoretical Physics, ETH Zurich, CH-8093, Switzerland}
\affiliation{Center for Computational Quantum Physics, Flatiron Institute, New York, NY, 10010, USA}
\author{Sebastian D. Huber}
\affiliation{Institute for Theoretical Physics, ETH Zurich, CH-8093, Switzerland}

\author{Erez Berg}
\affiliation{Department of Condensed Matter Physics, Weizmann Institute of Science, Rehovot 76001, Israel}

\author{Steven A. Kivelson}
\affiliation{Department of Physics, Stanford University, Stanford, CA 94305, USA}
\begin{abstract}
 The properties of a two-dimensional electron gas (2DEG) in a semiconductor host with two valleys related by an underlying $C_4$ rotational symmetry are studied using Hartree-Fock (HF) and various other many-body approaches. A familiar artifact of the HF approach is a degeneracy between the valley polarized - ``Ising nematic'' - and spin polarized - ferromagnetic - phases, which is inconsistent with recent variational Monte Carlo (VMC) results. Correlation effects, computed either within the random phase approximation (RPA) or the T-matrix approximation, enhance the valley susceptibility relative to the spin susceptibility. Extrapolating the results to finite interaction strength, we find a direct first-order transition from a symmetry-unbroken state to a spin unpolarized Ising nematic fluid with full valley polarization, in qualitative agreement with VMC. The RPA results are also reminiscent of experiments on the corresponding 2DEG in AlAs heterostructures. 
\end{abstract}
\maketitle

\section{Introduction}

The two-dimensional electron gas (2DEG) is one of the cornerstone problems in condensed matter physics. Its theoretical importance stems from the  simplicity of the model, the richness of the phase diagram, and the fact that its properties can be accurately probed experimentally. 
Still further diversity of behavior becomes possible when, in addition to the spin degree of freedom, the electrons carry a ``valley'' degree of freedom that depends on the host semiconductor's band structure. 

In particular, we consider a system that is overall $C_4$ symmetric but each valley is not $C_4$ symmetric on its own 
(Fig.~\ref{fig:Fig1}a).
Such a 2DEG is realized in AlAs quantum wells~\cite{Shayegan2006AlAs2DEG}. In that system, there is experimental evidence \cite{Hossain2021AlAsValleyPolarization} that in the low temperature ($T$) limit, this generalized 2DEG has a direct  transition from a 
{symmetry-unbroken} Fermi liquid
to a 
nematic, fully valley-polarized Fermi liquid. 
Moreover, 
we~\cite{valenti2023nematic} recently 
used variational Monte Carlo (VMC) to numerically study possible nematic and ferromagnetic fluid phases of the two-valley 2DEG 
and obtained results that are qualitatively consistent with the experiments, {i.e.} we found a {direct} transition between a $C_4$ symmetric liquid to a fully valley-polarized nematic liquid state without any spin ordering. The present work aims to provide an analytical understanding of 
these results. 

The standard Hartree-Fock (HF) approximation is inadequate for this purpose, both for the well known reason that it implies a Stoner-like instability at much too large electron densities, and because it implies equal tendencies to spin and valley (nematic) polarized states. We show that this unphysical degeneracy is lifted by the correlation energy at second order in interactions, favoring valley polarization. 
{ For short-range interactions, this result holds in T-matrix approximation, but for the case of Coulomb interactions, which is our principal focus, it is necessary to carry through an infinite-order resummation of perturbation theory within the Random Phase Approximation (RPA).}  We find that the RPA captures the qualitative aspects of the anisotropy-density phase diagram obtained with VMC (see Fig.~\ref{fig:Fig1}b). We also considered the possibility of
an inter-valley coherent state (IVC), but tentatively interpret the fact that the non-interacting IVC susceptibility is smaller than the spin or valley ones as evidence that it can be neglected here.

\begin{figure}[h!]
    \centering
    \includegraphics[width= 0.37\textwidth]{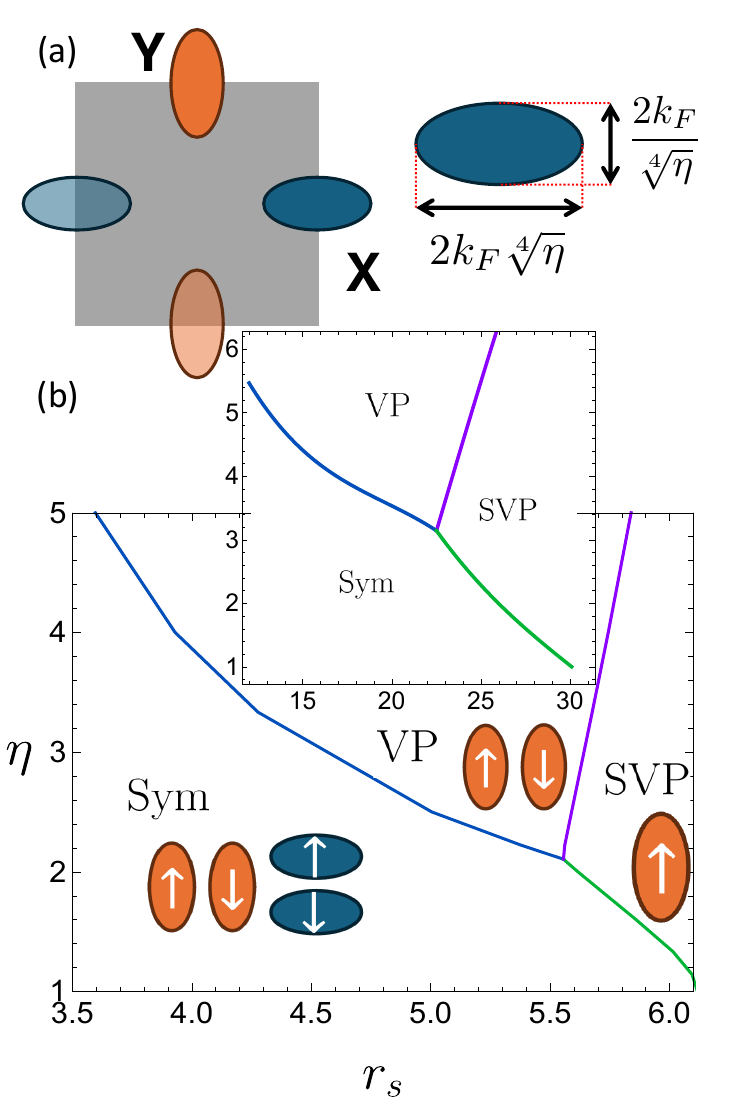}
    \caption{\textbf{(a)}
    Sketch of the position of the band minima of a two-dimensional semiconductor together with the definition of mass anisotropy $\eta$, and the Wigner-Seitz radius $\rs$. \textbf{(b)} Random Phase Approximation phase diagram. Inset: Variational Monte Carlo phase diagram from \cite{valenti2023nematic}.
    }
    \label{fig:Fig1}
\end{figure}

This paper is organized as follows: In Sec.~\ref{sec:Setup} we present the details of the model; Sec.~\ref{sec:Hartree-Fock} presents the results of the Hartree-Fock analysis; In Sec.~\ref{sec:LiftDegeneracy} both the effects of higher order perturbation theory and the RPA results are presented; Sec.~\ref{sec:DiluteLimitShortRange} deals with the dilute limit with short-range interactions; and we end in Sec.~\ref{sec:Discussion} with a discussion of our results.

\section{Setup}\label{sec:Setup}

The model we consider consists of interacting electrons in the presence of a non-dynamical uniform positively charged background. In addition to the spin index, each electron carries a valley index, $\tau=\pm 1$. 
The Hamiltonian is given by:
\begin{align}
\label{eq:Hamiltonian}
H=&-\sum \limits_{i} \frac{1}{2m}\big(\eta^{-\tau_i/2} \partial_{i,x}^2+ \eta^{+\tau_i/2} \partial_{i,y}^2 \big) \nonumber \\ &+ \sum \limits_{i<j} V(|{\bf r}_i- {\bf r}_j|),
\end{align}
where $i,j$ run over all electrons, $\tau_i$  denotes the valley flavor of electron $i$, and the (geometric) mean band mass is $m$. $V(r)= \frac{e^2}{4\pi \epsilon r}$ is the Coulomb interaction, where $\epsilon$ is the dielectric constant of the medium. (In Sec. \ref{sec:DiluteLimitShortRange} we will consider the case of a short-ranged effective interaction.) The parameter $\eta$ controls the anisotropy of the system: $\eta\neq 1$ corresponds to an elliptical shape of the Fermi surface. However, the model always has a $C_4$ rotation symmetry that acts on the spatial coordinates and exchanges the valley flavor (See Fig. \ref{fig:Fig1}a.)

Note that in Eq.~\eqref{eq:Hamiltonian},
consistent with the effective mass approximation, we have included only the small momentum-transfer portions of the interaction, with the consequence that $V$ (but not $H$) is $\SU(4)$ symmetric in spin and valley space. In an actual semiconductor, there are also terms that involve momentum transfers comparable to the Brillouin zone size, such as inter-valley exchange and pair hopping interactions; these are typically small relative to the $\SU(4)$ symmetric part of the Coulomb interaction (which involves a momentum transfer of the order of the Fermi momentum) by the ratio of the lattice spacing to the inter-electron distance, which is of the order of $10^{-2}$ in systems such as AlAs \cite{valenti2023nematic}.

There are only two dimensionless parameters: the mass anisotropy $\eta$ and the dimensionless Wigner-Seitz radius $\rs = \tfrac{m e^2}{4\pi \epsilon \hbar^2}\tfrac{1}{\sqrt{\pi \nel}}$, where $\nel$ is the {total} electron density. For ease of exposition, we introduce a flavor index $\alpha=1,2,3,4$ that corresponds to $(\sigma,\tau) = (+1,+1), (-1,+1), (+1,-1), (-1,-1)$, respectively, where $\sigma$ denotes the spin. We also define the anisotropy of flavor $\alpha$, $\eta_\alpha$, {as the effective mass anisotropy of that flavor,} so that $\eta_1=\eta_2 = \eta$ and $\eta_3=\eta_4 = 1/\eta$. 

\section{Hartree-Fock analysis}\label{sec:Hartree-Fock}

We have carried out a Hartree-Fock analysis of the Hamiltonian in Eq. (\ref{eq:Hamiltonian}), restricting ourselves to spatially uniform states, in which the number of electrons of each flavor is conserved. 
Our variational parameters are thus the occupancy and degree of spin polarization (along an arbitrarily chosen axis) of each valley, and the {variational anisotropy}, $\bar{\eta}_\alpha$, of each occupied elliptical Fermi surface {{that is not necessarily equal to the bare anisotropy}}.

The resulting HF phase diagram is shown in Fig.~\ref{fig:HF-phase diagram}. 
We find four distinct phases in each of which the HF ground state has an integer number $1\leq n \leq 4$ of equally occupied flavors - corresponding to the phases labeled $n$F in the figure. Note that the 2F region encompasses a variety of possible degenerate phases with distinct patterns of symmetry breaking, of which a subset (shown schematically) is:
1) a valley-polarized (VP) state (e.g. $\nu_1=\nu_2=1/2$); 2) a $C_4$-symmetric spin-polarized (SP) state (e.g. $\nu_1=\nu_3=1/2$); and 3) an altermagnet (AM)~\cite{altermagnetism} (e.g. $\nu_1=\nu_4=1/2$). 
Here $\nu_\alpha$ is the fraction of electrons with flavor $\alpha$, such that $\sum_\alpha\nu_\alpha=1$. 
{Within HF, all the above 2F phases are exactly degenerate, despite the fact that these states are not related by any symmetry of the problem. Correlation effects beyond HF lift this degeneracy, as we show below.} In all cases, we find that the optimal value of $\tilde{\eta}_{\alpha}$ is closer to 1 than the bare anisotropy, since the exchange contribution to the energy favors a less eccentric Fermi surface. The details of the calculations are given in App.~\ref{app:Hartree-Fock analysis}. 

\begin{figure}[b]
    \centering
    \includegraphics[width=0.45\textwidth]{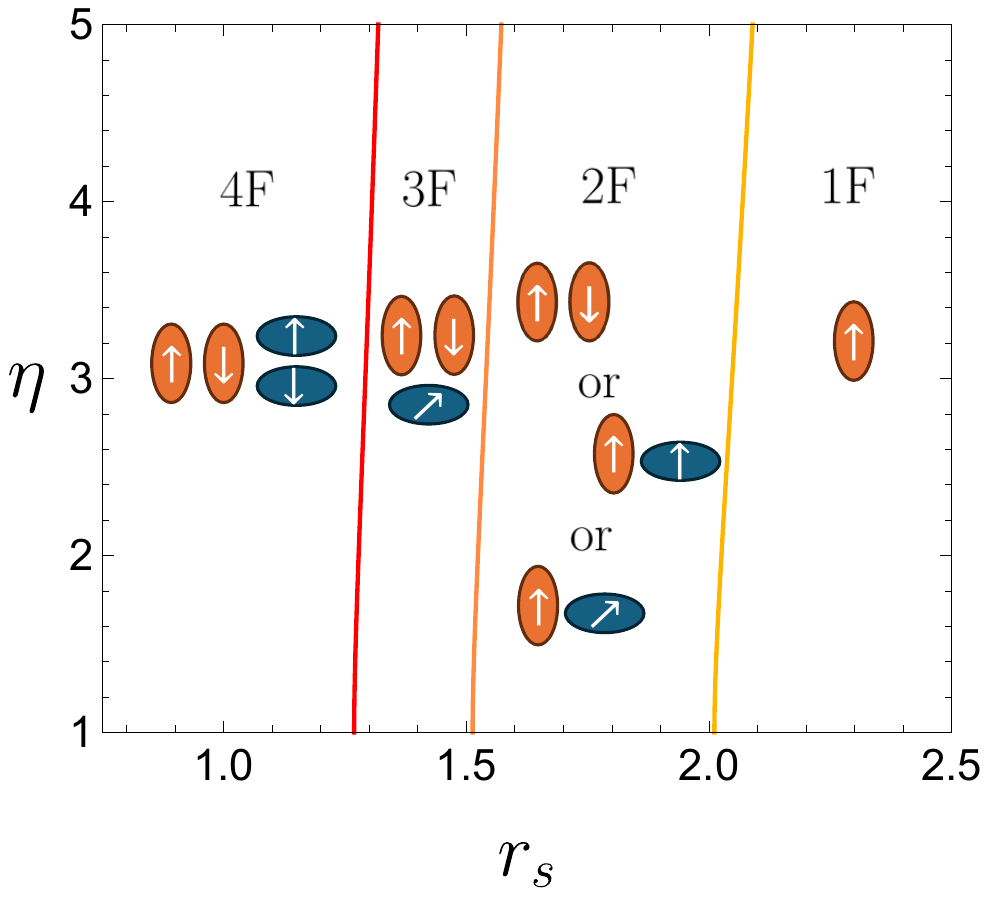}
    \caption{Hartee-Fock phase diagram.  `$n$F' denote phases where $n=1,2,3$, or $4$ 
    flavors are equally populated. {The set of degenerate 2F states {shown are} the valley-polarized state, spin-polarized state, and a state in which the relative spin orientation between valleys is arbitrary. }
 }
    \label{fig:HF-phase diagram}
\end{figure}
{We have also analyzed, albeit incompletely,
inter-valley coherent (IVC) states that preserve $C_4$ rotation symmetry (see App.~\ref{app:IVCstates}). We found them to be disfavored compared to the states above. In particular, we find that the IVC susceptibility for the symmetric state is always smaller (for $\eta\neq 1$) than $\chi_{\rm{v}}$ 
\begin{equation}
   \frac{\chi_{\rm{IVC}}}{\chi_{\rm{v}}} =   \frac{2\sqrt{\eta}}{\eta-1}\arcsin\left(\frac {\eta-1}{\eta +1}\right) \leq 1,
\end{equation}
where $\chi_{\rm{v}} = \frac{2m}{\pi}$ is the valley susceptibility. For intermediate interaction strength, we have estimated the energy of the IVC and SVP states by extrapolating the expressions obtained from a strong-coupling expansion for Slater determinants describing a $C_4$-symmetric IVC state and the SVP state, respectively. We found that the SVP state has lower energy than the IVC state.} 

A more systematic self-consistent Hartree-Fock calculation, including the possibility of states that break the translational symmetry, is left for future work.

\section{Lifting the degeneracy}\label{sec:LiftDegeneracy}

\subsection{Second-order perturbation theory}
\begin{figure}[t]
    \centering
    \includegraphics[width=0.45\textwidth]{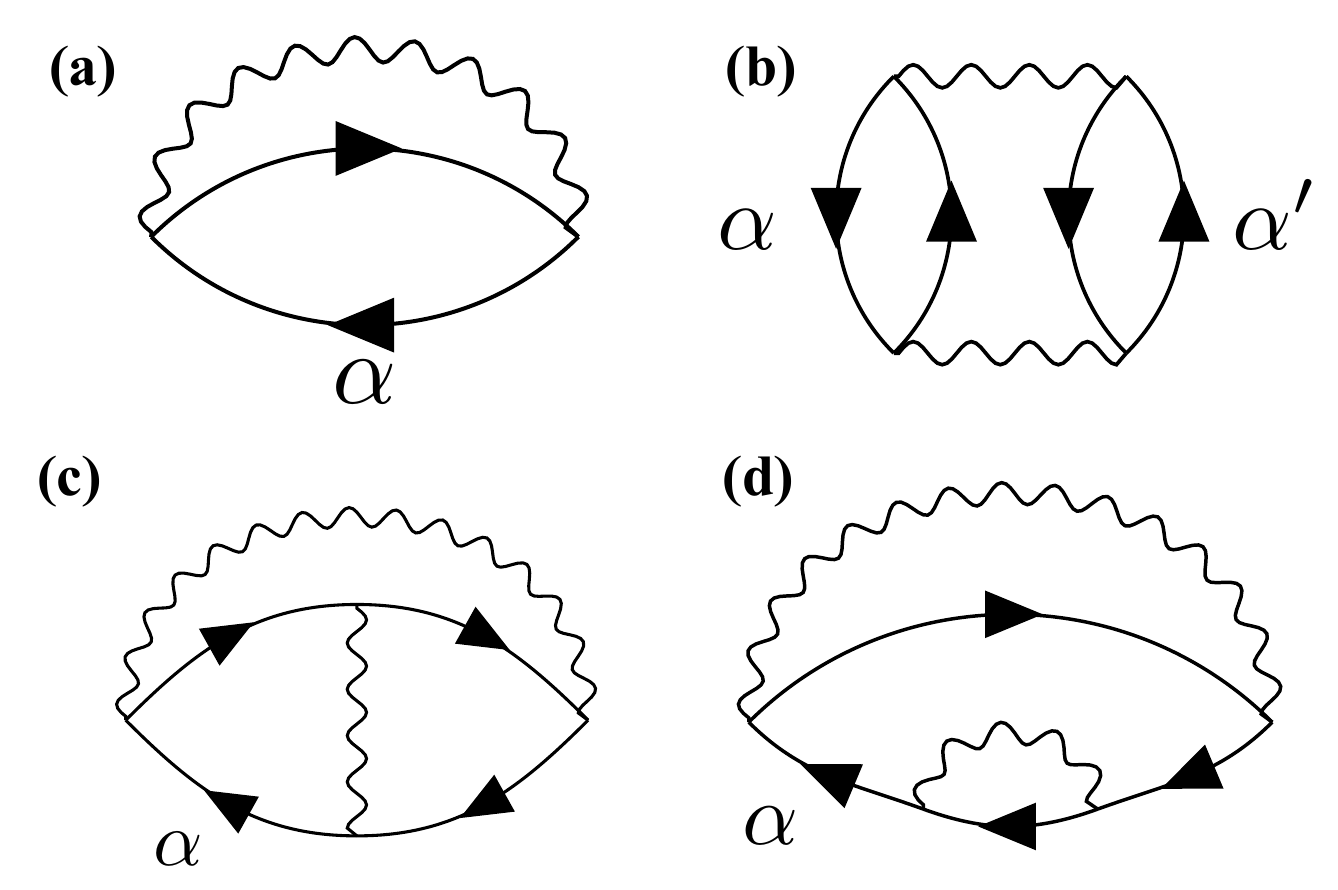}
    \caption{Diagrams that contribute the energy to second order in interactions. $\a,\a'\in \{(X,\ua), (X,\da), (Y,\ua), (Y,\da)\}$ are flavor indices.
    }
    \label{fig:2PT}
\end{figure}
To gain intuition for why the valley polarized state is preferred over the other states, we evaluated corrections beyond HF for the various 2F states. Using the fact that the particle number for each flavor $\alpha$ is a good quantum number, we can evaluate the ground state energy with any given flavor polarization using many-body perturbation techniques. Due to the spin-rotation symmetry within each valley, without loss of generality, we can focus on two states: the valley polarized (VP) and the spin polarized (SP) state. 

The diagrams corresponding to corrections to the ground state energy up to second order are shown in Fig.~\ref{fig:2PT}.
The only diagram that gives a different contribution to the energy of VP and SP states is diagram \textbf{b}.  This is because terms with a single fermion loop are sums of individual flavor terms, and are not sensitive to the different dispersion of the two valleys. This is not the case for diagram \textbf{b} because there are two fermion loops. 
Its contribution to the energy, $E_{\textbf{b}}$ (the subscript refers to the diagram), is 
\begin{align}\label{eq:E_e}
    \frac{E_{\textbf{b}}}{\A} &= -\frac{1}{2}\int_{0}^{\infty}\!\frac{\dd{\Omega}}{2\pi}\int\!\frac{\dd^2{\qbs}}{(2\pi)^2} \left[\tilde{v}(\qbs)\right]^2 [\Pi(\ii\Omega,\qbs)]^2,
\end{align}
where $\A$ is the area of the system and $\Pi(\omega,\qbs)$ {is the Lindhard function evaluated at the corresponding electron densities (this depends on $\nel$ and the polarization pattern, {e.g.} VP or SVP). We give explicit expressions in Eq.~\ref{eq:Lindhard} in App.~\ref{app:Energy2DEGAnisotropic}}. $\tilde{v}(\qbs) = \int V(\abs{\rbs})\ee^{\ii\rbs\cdot\qbs}\dd{\rbs}$ is the Fourier transform of the interaction. In App.~\ref{app:Er_Leading}, {we show that $E_{\textbf{b}}$ is necessarily more negative for the VP state than for the SP state assuming that the interactions are $C_4$ symmetric and $V(\abs{\mathbf{r}})$ is not constant.}
{Therefore under these assumptions, to second order in the interactions, the ground state energy in the valley-polarized sector is lower than in the spin-polarized sector, regardless of whether the interactions are attractive or repulsive. }

The result can be generalized for any amount of polarization: to second order in the interaction strength a partially valley polarized state with $\zeta = \nu_1 + \nu_2 - \nu_3 - \nu_4$ has a lower energy than a spin-polarized state with a spin polarization $\zeta = \nu_1 - \nu_2 + \nu_3 - \nu_4$ (see Appendix \ref{app:Er_Leading} for the proof).

\subsection{Random Phase Approximation:} 
Infrared divergences appear in higher-order corrections to the energy due to the long-range nature of the Coulomb potential. This can be resolved by the clever resummation of the diagrams that goes by the name of the Random phase approximation (RPA). 

\begin{figure*}[t]
    \centering    \includegraphics[width=0.8\textwidth]{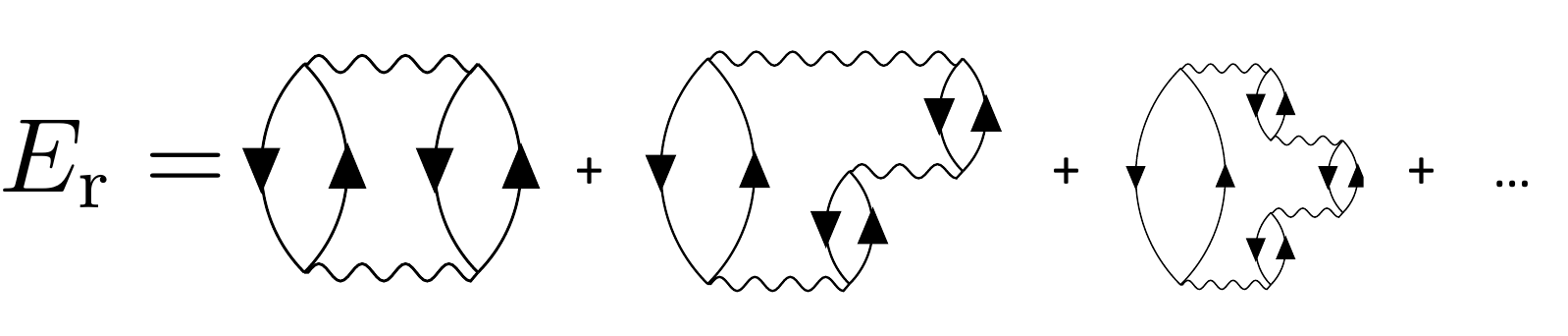}
    \caption{ Diagrammatic representation of the contribution to the correlation energy from ring diagrams, {a.k.a.} the RPA for the correlation energy. }
    \label{fig:EnergyRPA}
\end{figure*}

The correlation energy per unit area within RPA can be written in terms of the non-interacting Lindhard function (summed over all flavors), $\Pi(\ii\Omega,\qbs)$, ({e.g.} see \cite{giuliani2008quantum})
\begin{widetext}
\begin{equation}\label{eq:RPAenergy}
\frac{E_{\rr}}{\A} =
   \int_{0}^{\infty}\frac{\dd{\Omega}}{2\pi}\int\frac{\dd^2{\qbs}}{(2\pi)^2} \Big\{\log[1+\tilde{v}(\qbs)\Pi(\ii\Omega,\qbs)]-\tilde{v}(\qbs)\Pi(\ii\Omega,\qbs)\Big\}.
\end{equation}
\end{widetext}
$E_{\text{r}}$ can be represented diagramatically as in Fig.~\ref{fig:EnergyRPA}. (The subscript `r' references the ring shape of the diagrams.)

Let $E_{\rr}^{\text{pVP}}[\zeta_v]$ and $E_{\rr}^{\text{pSP}}[\zeta_s]$ be the energy in Eq.~\ref{eq:RPAenergy} evaluated using the Lindhard function $\Pi$ of the non-interacting ground state with polarization $\zeta_v=\nu_1 + \nu_2 - \nu_3 - \nu_4$ and $\zeta_s= \nu_1 - \nu_2 + \nu_3 - \nu_4$, respectively. We have the following result:
\begin{theorem}\label{theorem:Valley}
Let $\tilde{v}(\qbs)$ be positive $(\tilde{v}>0)$ and $C_4$ symmetric ($\tilde{v}(C_4\qbs) = \tilde{v}(\qbs)$). The RPA correlation energy for the valley-polarized state and for the spin-polarized state, satisfy \begin{equation}\label{eq:Theorem1Eq}
        {E^{\mathrm{pVP}}_{\rr}}[\zeta] < {E^{\mathrm{pSP}}_{\rr}}[\zeta]
    \end{equation}
    given that $\eta\neq 1$.
\end{theorem}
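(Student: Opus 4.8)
The plan is to reduce Eq.~\eqref{eq:Theorem1Eq} to a single pointwise concavity inequality, letting the $C_4$ symmetry carry the entire comparison between the two states. First I would re-express the flavor-summed Lindhard functions through the single-flavor Lindhard function $\Pi_\eta(\ii\Omega,\qbs)$ of a flavor with anisotropy $\eta$ at a given density. Because the $C_4$ operation rotates $\qbs$ by $90^\circ$ while exchanging the two valleys, it maps a flavor of anisotropy $\eta$ onto one of anisotropy $1/\eta$, so that $\Pi_{1/\eta}(\ii\Omega,\qbs) = \Pi_\eta(\ii\Omega,C_4\qbs)$. Matching the pVP and pSP states at equal per-flavor densities (so that they share the polarization $\zeta$ and the total density) and bookkeeping which anisotropy attaches to which occupation then gives the master identity
\begin{equation}
  \Pi^{\mathrm{pSP}}(\ii\Omega,\qbs)=\tfrac12\big[\Pi^{\mathrm{pVP}}(\ii\Omega,\qbs)+\Pi^{\mathrm{pVP}}(\ii\Omega,C_4\qbs)\big].
\end{equation}

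With this in hand, define $\phi(\qbs)\equiv \tilde v(\qbs)\,\Pi^{\mathrm{pVP}}(\ii\Omega,\qbs)$, which is nonnegative since $\tilde v>0$ and the imaginary-axis Lindhard function is nonnegative. Using the assumed $C_4$ invariance $\tilde v(C_4\qbs)=\tilde v(\qbs)$, the integrand of Eq.~\eqref{eq:RPAenergy} is $f(\phi(\qbs))$ for the valley-polarized state and $f\big(\tfrac12[\phi(\qbs)+\phi(C_4\qbs)]\big)$ for the spin-polarized state, where $f(x)=\log(1+x)-x$. Since $f''(x)=-(1+x)^{-2}<0$, the function $f$ is strictly concave on $(-1,\infty)$, so pointwise $f\big(\tfrac12[\phi(\qbs)+\phi(C_4\qbs)]\big)\ge \tfrac12 f(\phi(\qbs))+\tfrac12 f(\phi(C_4\qbs))$. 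Integrating over the $C_4$-invariant measure $\tfrac{\dd\Omega}{2\pi}\tfrac{\dd^2q}{(2\pi)^2}$ and using $\int f(\phi\circ C_4)=\int f(\phi)$ collapses the right-hand side to $E^{\mathrm{pVP}}_{\rr}$, yielding $E^{\mathrm{pSP}}_{\rr}\ge E^{\mathrm{pVP}}_{\rr}$. As a consistency check, expanding $f$ to quadratic order reproduces the diagram-\textbf{b} comparison of the previous subsection.

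The remaining and most delicate point is strictness. Equality in the integrated inequality forces $\phi(\qbs)=\phi(C_4\qbs)$ almost everywhere, and since $\tilde v>0$ this is equivalent to $\Pi^{\mathrm{pVP}}$ being $C_4$ symmetric. For the fully valley-polarized versus fully spin-polarized comparison this reduces cleanly to $\Pi_\eta=\Pi_{1/\eta}$, i.e.\ the single-flavor Lindhard function is $C_4$ symmetric, which holds only at $\eta=1$; I would confirm the failure on a positive-measure set for $\eta\neq1$ by inspecting the small-$\qbs$ expansion of $\Pi_\eta$, whose leading angular anisotropy tracks the elliptical Fermi surface. For general partial polarization the analogous condition is that $\Pi_\eta-\Pi_{1/\eta}$ depend nontrivially on density (it vanishes identically only as $\zeta\to0$, where pVP and pSP coincide), and establishing this density dependence is the step I expect to require the most care.
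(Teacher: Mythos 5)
Your proposal is correct and takes essentially the same route as the paper's own proof in App.~\ref{app:Er_RPA}: your master identity $\Pi^{\mathrm{pSP}}(\ii\Omega,\qbs)=\tfrac12\bigl[\Pi^{\mathrm{pVP}}(\ii\Omega,\qbs)+\Pi^{\mathrm{pVP}}(\ii\Omega,C_4\qbs)\bigr]$ is exactly the paper's decomposition $\Pi_{\mathrm{pVP}}=2\pi_0$, $\Pi_{\mathrm{pSP}}=\pi_0+C_4\pi_0$, and your pointwise Jensen step for the strictly concave $f(x)=\log(1+x)-x$ is the paper's Lemma~\ref{Lemma:inequality} applied with the convex $x-\log(1+x)$ on the $C_4$-invariant measure. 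The only substantive difference is at the strictness step, where you are more careful than the paper (which simply asserts that $\pi_0\neq C_4\pi_0$ on a positive-measure region when $\eta\neq 1$): your proposed small-$\qbs$, fixed-$\Omega$ expansion, in which each flavor contributes $\Pi_\alpha\simeq n_\alpha q^2 I(\eta_\alpha,\theta)^2/(m\Omega^2)$, actually disposes of the full and partial polarization cases simultaneously, since the obstruction to equality is proportional to $(n_+-n_-)(I_X^2-I_Y^2)\propto \zeta\,(I_X^2-I_Y^2)$, nonzero for $\zeta\neq 0$ and $\eta\neq 1$.
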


The intuition for this theorem is that electrons with the same dispersion are more effective at screening each other without paying too much kinetic energy. The proof is given in Appendix~\ref{app:Er_RPA}.

As a corollary, we show that, for Coulomb interactions and within our approximation, the valley susceptibility ($\chi_{\vv}$) is larger than the spin susceptibility ($\chi_{\ss}$). The susceptibilities are given by $\chi_{\text{v/s}}^{-1} \equiv \frac{1}{\A}\pdv[2]{E^{\mathrm{pVP/pSP}}[\zeta]}{\zeta}\eval_{\zeta=0}$. The symmetric state is an extremum of the energy (for sufficiently small $\rs$, it is a local minimum) so $\partial_{\zeta}E^{\mathrm{pSP}}$ and $\partial_{\zeta}E^{\mathrm{pVP}}$ are zero when $\zeta=0$. Thus
\begin{equation}
    \begin{split}
        E^{\text{pSP}} &= E^{\text{Sym}} + \frac{\A \z^2}{2\chi_{\ss}} +\Omc(\z^3);\\
        E^{\text{pVP}} &= E^{\text{Sym}} + \frac{\A\z^2}{2\chi_{\vv}}+\Omc(\z^3);
    \end{split}
\end{equation}
Using Theorem~\ref{theorem:Valley}, we obtain the desired result
\begin{equation}
    0\leq \lim_{\zeta \to 0} \frac{E^{\text{pSP}}-E^{\text{pVP}}}{\A \zeta^2/2} = \frac{1}{\chi_{\ss}}-\frac{1}{\chi_{\vv}} \Longrightarrow \chi_{\ss} \leq \chi_{\vv}.
\end{equation}

Now that we have lifted the {artificial} degeneracy between the valley polarized state and the other 2F states, we examine the phase diagram within RPA. We use the RPA energy $E_{\text{RPA}} = E_{0} + E_{\xx} + E_{\rr}$, where $E_{\text{0}}$ is the non-interacting energy and $E_{\text{x}}$ is the exchange energy (diagram \ref{fig:EnergyRPA}\textbf{a}). {The phase diagram is mapped by minimizing $E_{\text{RPA}}$ over the spin and valley polarizations (allowing for partial polarizations {near the Sym-VP transition line}).}
We show the resulting RPA phase diagram in Fig.~\ref{fig:Fig1}. Satisfactorily, it captures the main features of the variational phase diagram of Ref.~\cite{valenti2023nematic}: 1) only three phases are present: 
{the symmetric phase, the fully valley polarized phase, and the fully polarized spin and valley phases, separated by first-order transitions;} and 2) there is a critical value of anisotropy $\eta$ that stabilizes the valley polarized state. However, the critical value of $\rs$ and $\eta$ are not quantitatively accurate (and, in particular, the critical $r_s$ for valley polarization within RPA is {roughly a factor of 3} smaller than {values from } the numerical VMC calculation), which is to be expected from previous studies in the standard 2DEG \cite{giuliani2008quantum}. 

{We also estimated the valley susceptibility by fitting the polarization dependence of the energy (see App.~\ref{app:PolarizationDependence} for details). We find that, although the susceptibility is significantly enhanced upon increasing $r_s$ (as anticipated recently in Refs.~\cite{raines2024unconventional,raines2024a_unconventional}), it remains finite at the transition. Such an enhancement of the valley susceptibility with decreasing electron density was observed in AlAs~\cite{Shayegan2006ValleyMeasurements}. }

{Finally, we comment on the effects of finite temperature on the phase diagram. We estimated these effects within RPA (see App.~\ref{app:TemperatureEffects}). 
As expected from Fermi liquid theory, the low $T$ of the Helmholtz free energy for each state $a\in \StateList$, $F_{a}$ is $F_{a}[\rs] \approx E_a[\rs] - \gmf_a(\rs) \frac{\A m \p T^2}{12} $. 
Comparing the symmetric and valley polarized phases, $\gmf_{\text{Sym}}(\rs) - \gmf_{\text{VP}}(\rs)>0$, as expected from the fact that the symmetric phase has a higher density of states at the Fermi level. This implies that the Symmetric to VP phase boundary should shift to higher values of $r_s$ with increasing temperature. 
Moreoever, we find that $\gmf_{\text{Sym}}(\rs) - \gmf_{\text{VP}}(\rs)$ is enhanced by interactions relative to the non-interacting value. }

\section{Model with short-range interactions}\label{sec:DiluteLimitShortRange}

\begin{figure*}[t]
    \centering    
    \includegraphics[width=0.8\textwidth]{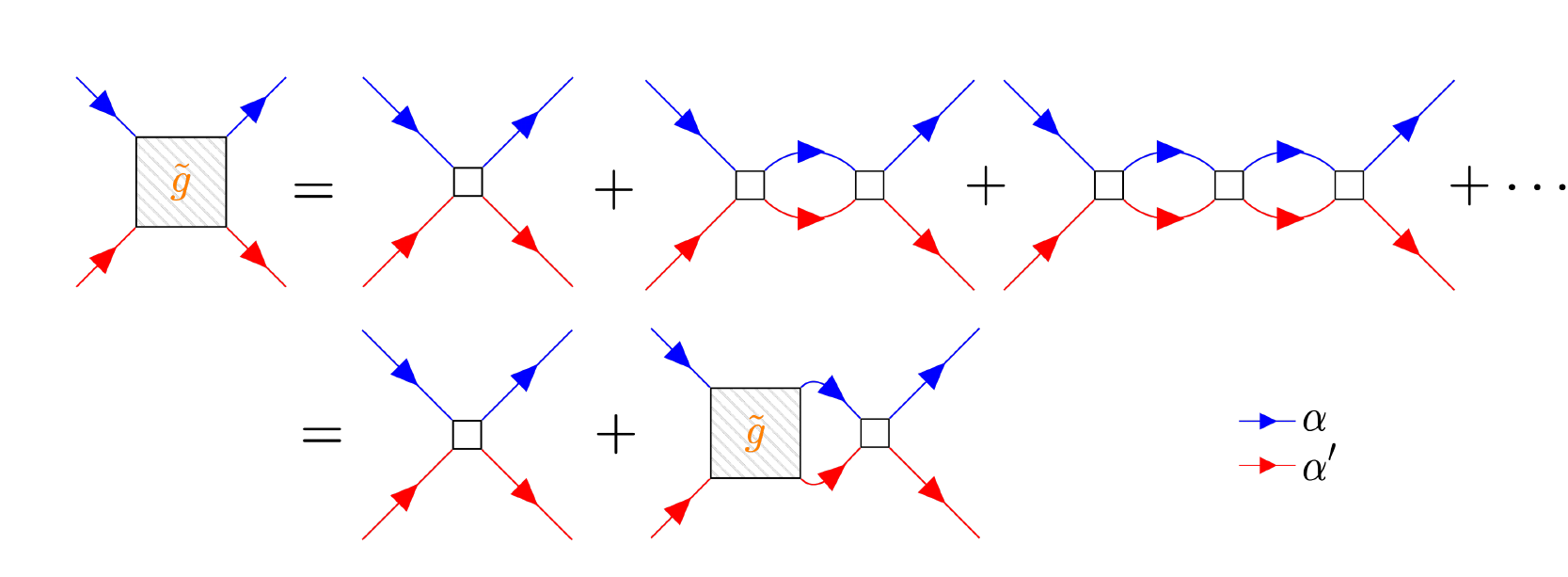}
    \caption{ Diagrammatic representation of T-matrix approach. White boxes represent bare interactions while big boxes denote the renormalized interactions. Different color lines represent different flavors. }
    \label{fig:Tmatrix}
\end{figure*}
In this section, we analyze the behaviour of the same two-valley 2DEG, Eq.~\eqref{eq:Hamiltonian}, where the Coulomb interaction is replaced by a finite-ranged interaction. We consider the dilute limit where the average inter-electron distance is larger than the range of the interaction. This situation can be realized in the presence of nearby metallic gates which screen the long-ranged part of the interaction, if the distance to the gates is shorter than the inter-electron distance. Although this is not the case in the experiments of Ref.~\cite{Hossain2021AlAsValleyPolarization}, this limit is instructive, since as we shall show below, the selection of the valley polarized state by correlation effects driven by the anisotropic dispersion is particularly transparent. The same analysis can be readily generalized to other multi-valley 2DEGs with anisotropic dispersions. 

Specifically, we consider a contact interaction, {i.e.} 
$V({\rbs}) =  \frac{4\pi g_0}{m} \delta(\rbs)$, 
where $g_0>0$ is 
dimensionless. We assume that the overall electronic bandwidth, $W$, is much larger than the Fermi energy, $E_F$. 
We use the T-matrix approach (see e.g. \cite{galitskii1958energy,abrikosov1958concerning,Engelbrecht1992,arovas2022hubbard}) to calculate the effective interaction of electrons near the Fermi level, renormalized by virtual scattering to high-energy states. 
Physically, this approach amounts to solving exactly the two-particle scattering problem, and using the scattering amplitude as an effective interaction for particles near the Fermi level.
The effective interaction is given by an infinite sum of ladder diagrams, shown in Fig. \ref{fig:Tmatrix}. 
In two dimensions, the calculation is controlled by the parameter $1/\log(W/E_F)$~\cite{Engelbrecht1992}, since diagrams not included in Fig.~\ref{fig:Tmatrix} are suppressed by a factor of $\log(W/E_F)$ relative to the corresponding term in the ladder sum of the same order in $g_0$.

The effective interaction between flavor $\a$ and $\a'$ electrons with $\a \neq \a'$ is given by: 
\begin{equation}
    \tilde{g}_{\a\a'} = \frac{g_0}{1+\frac{4\pi g_0}{m} \Gamma_{\a\a'}},
\end{equation}
with $\Gamma_{\a\a'}$ the particle-particle susceptibility evaluated at the Fermi energy (see App.~\ref{app:Tmatrix} for details).
The non-zero effective interactions are
\begin{equation}\label{eq:T-Matrix}
    \begin{split}
       \tilde{g}_{XY} \equiv  \tilde{g}_{\s X, \s'Y}&=  \frac{g_0}{1 +\frac{2g_0\log(W/\eF)}{\eta^{+1/2}+\eta^{-1/2}}};\\
      \tilde{g}_{XX}\equiv   \tilde{g}_{\ua X, \da X}&=  \frac{g_0}{1 +g_0 \log(W/\eF)}.
    \end{split}
\end{equation}
Since $\frac{\sqrt{\h}+1/\sqrt{\h}}{2} >1$, 
we find that ${\tilde{g}}_{XX}< {\tilde{g}}_{XY}$. In terms of $\tilde{g}_{\alpha\alpha'}$, the valley and spin susceptibilities are given by:
\begin{align}
    \chi_{\vv}&=\frac{2m}{\pi}\frac{1}{1+\tilde{g}_{XX}-2\tilde{g}_{XY}},\nonumber\\
    \chi_{\ss}&=\frac{2m}{\pi}\frac{1}{1-\tilde{g}_{XX}}.
    \label{eq:T_matrix_susceptibilities}
\end{align}
This implies that $\chi_{\vv}>\chi_{\ss}$.

{The result of this section relies crucially on the anisotropy of the quadratic part of the dispersion. For instance, we considered a different problem of two valleys with an isotropic mass tensor but opposite trigonal warping terms. In this case, the inter-valley and intra-valley particle-particle susceptibilities are equal to the leading order in $\log(W/E_{F})$. However, the inter-valley particle-particle susceptibility is larger to subleading order in $\log(W/E_F)$. See App.~\ref{app:TrigonalWarping} for details.   }

\section{Discussion}\label{sec:Discussion}

We have shown how correlation effects tend to stabilize a valley-polarized Ising nematic state in a multi-valley 2DEG. This effect is enabled by the different anisotropies of the electronic dispersions in the two valleys. Physically, this effect originates from the fact that a pair of electrons from the same valley can avoid each other more effectively than two electrons from different valleys. Equivalently, the interaction between electrons is more effectively screened when they all occupy the same valley. 
The system can thus gain potential energy by populating only a single valley. A similar effect has been predicted to lead to a rotation symmetry-broken state in a 2DEG with a dispersion that is minimal along a ring in momentum space~\cite{Yang2006,Gopalakrishnan2011,Berg2012Rashba2DEGcrystal,Ruhman2014}. Our analytical results, using either RPA or the T-matrix approximation, are in good qualitative agreement with recent variational Monte Carlo calculations \cite{valenti2023nematic}, and with experiments in AlAs 2DEGs \cite{Hossain2021AlAsValleyPolarization}.

It is interesting to discuss our results in the context of the recently discovered strongly correlated 2DEGs in multilayer graphene and transition metal dichalcogenides (TMDs). These systems have multiple valleys in their electronic structure, and show a variety of symmetry-broken metallic states, including spin and valley-polarized states. The symmetry-broken phases are often described within the Hartree-Fock approximation; our analysis shows that, at least in the case of AlAs, HF is inadequate to describe the system even qualitatively, since it does not include the crucial correlation effects. This is perhaps not surprising, since we are dealing with a strongly correlated system.

It would be interesting to extend our work to other multi-valley 2DEGs with different symmetries. A hexagonal system with Fermi pockets centered at the $M$ points should have similar physics to the one we have described in this work, since each pocket has an anisotropic effective mass tensor, as in the C$_4$ symmetric case. The situation may be different in the case where the pockets are centered at the $K$ and $K'$ points of the Brillouin zone, as in graphene and in TMDs, due to the different symmetry at these points. 

\begin{acknowledgments}
We thank Andrey Chubukov for useful discussions and for his comments on this manuscript. 
V.C., S.A.K and E.B. were funded, in part, by NSF-BSF award DMR-2310312.
E.B. acknowledges support from the European Research Council (ERC) under grant HQMAT (grant agreement No. 817799).
This work has received funding from the
European Research Council under grant agreement no. 771503. S.D.H. acknowledges support by the Benoziyo Endowment Fund for the Advancement of Science. The Flatiron Institute is a division of the Simons Foundation.
\end{acknowledgments}


\bibliography{References}
\appendix

\newpage
 \onecolumngrid

\section{Hartree-Fock analysis}\label{app:Hartree-Fock analysis}

\subsection{Setup}
To do the Hartree-Fock analysis it is more convenient to start with the second quantized version of the Hamiltonian in Eq.~\ref{eq:Hamiltonian}:
\begin{equation}\label{eq:Hamiltonian:2ndQ}
    H = \sum_\a\sum_{\kbs} \epsilon_{\a}(\kbs) c^{\dagger}_{\kbs,\a}c^{}_{\kbs,\a}  + \frac{1}{2\A}\sum_{\a,\a'}\sum_{\kbs_1,\kbs_2; \qbs\neq \zero} \tilde{v}(\qbs) c^{\dagger}_{\kbs_1+\qbs,\a}
    c^{\dagger}_{\kbs_2-\qbs,\a'}
    c^{}_{\kbs_2,\a'}
    c^{}_{\kbs_1,\a}
\end{equation}
with $\epsilon_{\a}(\kbs) = \frac{k_x^2/\sqrt{\eta_\a}+k_y^2 \sqrt{\eta_\a}}{2m}$. We have excluded the $\qbs=0$ component {due to the neutralizing background}. Written in this way, it is obvious that the system has a $\U(2)\times \U(2)$ internal symmetry with generators 
\begin{equation}\label{eq:GeneratorsOfContinuousSymmetries}
    Q_{\t}^{a} = \sum_{\kbs} c^{\dagger}_{\kbs,(\sigma,\t)} s^{a}_{\s\s'}c_{\kbs,(\sigma,\t)}; \quad a=0,1,2,3; \tau = X, Y;
\end{equation}
where $s^0$ is the identity matrix and $s^a$ are the Pauli matrices. 

In addition to the above internal symmetries, we have $\RR^2\rtimes D_4$ internal-spatial symmetries - $\RR^2$ corresponds to translations and $D_4$ is the point group action such that $\pi/2$ rotations also sends $X\leftrightarrow Y$. The total symmetry group is $(\RR^2\times \U(2)\times \U(2) )\rtimes D_4$.

{As discussed in the main text, we will focus on states that are translation invariant, which means we will not consider charge-density waves or spin-density waves. In App.~\ref{app:diagonal}, we consider states that are obtained by occupying eigenstates of the kinetic energy operator. In App.~\ref{app:IVCstates}, we consider inter-valley coherent states.}

\subsection{`Diagonal' states}\label{app:diagonal}
Our variational states are specified by the electron density for each flavor, $n_{\alpha}$, and mass anisotropy used to define the Fermi surface, $\bar{\eta}_\a$. As we are interested in comparing the energies of states at the same electron density $\nel$, we can parametrize our states by the flavour filling $\nu_\a \equiv \frac{n_\a}{\nel}$. These are constrained to $\sum_{\a}\nu_\a =1$. To be more explicit, our variational states are 
\begin{equation}
    \ket{\nu_{\a},\tilde{\h}_\a} = \prod_{\a} \left( \prod_{\kbs: k_x^2/ \sqrt{\bar{\h}_\a}+ k_y^2 \sqrt{\bar{\h}_\a}\leq 4\p n_{\a} }c^{\dagger}_{\kbs,\a}\right) \ket{\text{Vaccum}}.
\end{equation}
It will be convenient to use express $n_{\a}$ in terms of energies via $\bar{E}_{F,\a} \equiv \frac{2\p n_{\a}}{m}$.

The energy per unit area $\Emc$, $\Emc\equiv \expval{H}/\A$, on these states are equal to $\Emc = \sum_{\a} \Emc_{\a}$, where the contribution of each flavor, $\Emc_{\a}$, is
\begin{equation}\label{eq:EnergyAlpha}
     \Emc_{\a}=\frac{m\bar{E}_{F,\a}}{2\pi}*\Bigg(
     \left(\sqrt{\frac{\h_\a}{\bar{\h}_{\a}}}+\sqrt{\frac{\bar{\h}_{\a}}{\h_\a}}\right)\frac{\bar{E}_{F,\a}}{4} - \frac{8}{3\pi}\frac{\Ksc(\bar{\eta}_{\a})}{m\aB} \sqrt{\frac{m \bar{E}_{F,\a}}{2}}
     \Bigg),
\end{equation}
where $\Ksc(\eta) = \frac{2\h^{1/4}\Kell(1-\h}{\p}$  and $\Kell$ is the elliptic K function.

Notice that $\Emc_{\a}$ is written in the form $n*(K-V_x)$ where $n$ is the fermion density, $K$ is the kinetic energy, and $-V_x$ is the exchange constribution. The kinetic energy is minimized at $\bar{\eta}_{\a} = \eta_\a$, while the exchange energy is minimized at $\bar{\eta}_{\a}=1$. Therefore, the optimal anisotropy, $\bar{\eta}_{\l}^*$, satisfies $0 < \abs{\log(\bar{\eta}_{\a}^*)} < \abs{\log(\eta_\a)}$. $\bar{\eta}_{\a}^*$ satisfies the equation 
\begin{equation}
    \frac{m\bar{E}_{F,\a}^2}{4\pi}*\Bigg(
    \frac{1}{4\bar{\eta}_\a^*}\left(-\sqrt{\frac{\eta_\a}{\bar{\eta}_{\a}^*}}+\sqrt{\frac{\eta_\a}{\bar{\eta}_{\a}^*}}\right) 
   - \frac{8\Ksc'(\bar{\eta}_{\a}^*)}{3\pi}\sqrt{\frac{2m }{m\aB^2\bar{E}_{F,\a}}}
    \Bigg) =0. 
\end{equation}
Therefore, we can write
\begin{equation}\label{eq:OptimalEta}
    \bar{\eta}_\a^* = \Fmc_1(\eta_\a, m\aB^2 \bar{E}_{F,\a}/2)
\end{equation}
for some function $\Fmc_1$ such that $\lim_{x\to 0} \Fmc_1(\eta,x) = 1$ and $\lim_{x\to +\infty} \Fmc_1(\eta,x) = \eta$. 

Thus, the optimal $\Emc_\a^*$ with respect to $\bar{\eta}_\a$ can be written as 
\begin{equation}
    \Emc_\a^* = \frac{m \bar{E}_{F,\a}^2}{4\pi} * \Fmc_2(\eta_\a, 1/\sqrt{m\aB^2  \bar{E}_{F,\a}/2} )
\end{equation}
with $\Fmc_2$ obtained by replacing $\bar{\eta}_{\a}$ of Eq.~\ref{eq:OptimalEta} in Eq.~\ref{eq:EnergyAlpha}. From the symmetry $(\eta_\a,\bar{\eta}_\a) \to (1/\eta_\a,1/\bar{\eta}_\a)$ in Eq.~\ref{eq:EnergyAlpha}, we obtain that $\Fmc_2(x,y) = \Fmc_2(1/x,y)$. Using that $\eta_\a =\eta^{\pm 1}$, we can write 
\begin{equation}\label{eq:EHF_functional}
    \Emc[\vec{\nu}; \rs] =\frac{\pi \nel^2}{m} \sum_\a \nu_\a^2\Fmc_2(\eta, \rs/\sqrt{\nu_\a}),
\end{equation}
where $\bar{E}_{F,\a} = \frac{2\pi \nu_\a}{m}\nel$. The $\nu_\a$'s are constrained to satisfy $\sum_\a \nu_\a =1$ and $\nu_a\geq 0$. Note that Eq.~\ref{eq:EHF_functional} is symmetric under permutations of $\nu_\a$. We can thus restrict to the subspace $1\geq \nu_1 \geq \nu_2 \geq \nu_3 \geq \nu_4 \geq 0$.

To find the minima of $\Emc$ in Eq.~\ref{eq:EHF_functional}, we systematically explored the parameter space $(\rs,\log_2(\eta))$ within the range $(\rs,\eta)$. Sampling with step sizes $(1/2^8,0)$ and $(0,1)$, we computed $E$ for each $(\rs,\h)$ pair on a grid of $\vec{\nu}$ values. The grid for $\vec{\nu}$ was constructed using three variables $(u_1,u_2,u_3)$ on the unit cube using the raltion $\nu_0 = 1/(1+u_1(1+u_2(1+u_3)))$ and $\nu_{j+1}= u_j \nu_j$ for $j=1,2,3$. We sampled $(u_1,u_2,u_3)$ both of a uniform grid with steps $1/8$ and with a uniform random distribution in the unit cube. We only found global minima of the form
\begin{equation}\label{eq:nuHF_solutions}
    \nu_{\a}^{*}= \begin{cases}
        \frac{1}{n} ;\quad &\a \leq n;\\
        0 ;\quad &\a> n.\\
    \end{cases}
\end{equation}
where $n=1,2,3,4$. 

Going back to the unconstrained $\nu_{\a}$ space, we find $\frac{4!}{n! (4-n)!}$ equivalent solutions for each solution in Eq.~\ref{eq:nuHF_solutions}. 
\begin{itemize}
    \item The $n=4$ solution simply corresponds to the symmetric liquid state. 
    \item The states $n=3$ are four-fold degenerate which corresponds to choosing one of the four flavors. This state will break the $\U(2)\times\U(2)$ internal symmetry down to a $\U(2)\times \U(1)^2$ subgroup. These states are nematic because the two valleys are inequivalent. 
    \item There are 6 states with $n=2$. Two of these states are the valley-polarized states and only break $C_4$ rotation symmetry. The other four states correspond to occupying different valleys with parallel or anti-parallel spins. 
    The former situation corresponds to `ferromagnetic' states, while the latter corresponds to `altermagnetic' states, {a.k.a.} `nematic spin nematic' states \cite{Oganesyan2001nematicFermiLiquid,Review2003DetectFluctuatingStripes}. 
    The `ferromagnetic' states break the internal symmetry down to $\U(1)^2\times\U(1)^2$ and they are related to the other `ferromagnetic' state and the `altermagnetic' state by some element in $\U(2)\times\U(2)$.
    \item The $n=1$ states break the same symmetries as the $n=3$ states.
\end{itemize}

We can obtain more degenerate states by using the symmetries of the full model. In Tab.~\ref{tab:table_SymmetryBreaking_HF}, we summarize the symmetry breaking patterns corresponding to the $n$-flavor states. 

\begin{table}
    \centering
    \begin{tabular}{c|c|c|c|c|c}
        $n$ & $1$ & $2_v$ & $2_s$ & $3$ & $4$\\\hline
         $H$& $\U(2)\times\U(1)^2\times D_2$  & $\U(2)\times\U(2)\times D_2$ & $\U(1)^2\times\U(1)^2\rtimes D_4$  & $\U(2)\times\U(1)^2\times D_2$ & $\U(2)\times\U(2)\rtimes D_4$ \\
         $\Mmc$& 
         $S^2\sqcup S^2$  & $*\sqcup *$ & $S^2\times S^2$ & $S^2\sqcup S^2$  & $*$ \\
    \end{tabular}
    \caption{Summary of symmetry-breaking patterns for the $n$-flavor states obtained in Hartree-Fock. $H$ is a group such that $\RR^2\rtimes H$ is isomorphic to the unbroken symmetries of the trial state. $\Mmc$ is the order parameter manifold. We have separated the $n=2$ states into two groups $2_v$ and $2_s$ which are not related by any symmetries of the many-body Hamiltonian. $S^2$ is the 2-dimensional sphere, $\sqcup$ denotes disjoint unit, and  $*$ is a single point. }
    \label{tab:table_SymmetryBreaking_HF}
\end{table}

\subsection{Inter-valley coherent states}\label{app:IVCstates}

We will focus on the competition between inter-valley coherent states and the `diagonal' fluid states. 

\subsubsection{Non-interacting susceptibilities}\label{app:IVC:weak-coupling}
\def\EF{E_{\rm{F}}}

Here, we calculate the non-interacting susceptibility for IVC ordering and compare it to the susceptibility for spin or valley ordering. 
For a contact (momentum-independent) interaction, the stability of the symmetric phase can be analyzed using the Stoner criterion, {$g \chi  > 1$}, where $g$ is the interaction strength and $\chi$ the corresponding susceptibility. Thus, the instability towards the ordering with the larger susceptibility onsets first. 

The non-interacting (uniform) IVC susceptibility is
\begin{equation}
    \begin{split}
        \chi_{\rm{IVC}} 
        &= -4\int\frac{\dd^2\kbs}{(2\p)^2} \frac{n_{F}(\varepsilon_{X,\kbs}) - n_{F}(\varepsilon_{Y,\kbs})}{\varepsilon_{X,\kbs}-\varepsilon_{Y,\kbs} + \ii 0^+},
    \end{split}
\end{equation}
where $n_F(\varepsilon) = \Theta(\EF - \varepsilon)$, $\varepsilon_{X,\kbs} = (k_x^2 / \sqrt{\eta} + k_y^2 \sqrt{\eta} )/ 2m$ and $\varepsilon_{Y,\kbs} = (k_y^2 / \sqrt{\eta} + k_x^2 \sqrt{\eta} )/ 2m$. 

After some algebra, we obtained
\begin{equation}
    \chi_{\rm{IVC}} = 4\frac{m}{2\pi}  \frac{2}{\sqrt{\h}-1/\sqrt{\h}} \arcsin(\frac{\sqrt{\h}-1/\sqrt{\h}}{\sqrt{\h}+1/\sqrt{\h}}) \leq 4\frac{m}{2\pi} = \chi_{\rm{v}}. 
\end{equation}
The last inequality follows by writing $\h =\ee^{2\s}$ so that $\frac{\chi_{\rm{IVC}}}{4\frac{m}{2\pi}} =  \frac{\arcsin(\tanh\s))}{\sinh(\s)}$. Plotting the later function, one can check that it is peaked at $\s=0$. 
Thus, we find the ellipticity of the Fermi surface ($\eta \ne 0$) enhances the valley (or spin) susceptibility relative to the IVC susceptibility. 

\subsubsection{Strong coupling limit of Hartree-Fock}\label{app:IVC:strong-coupling}

{We compare the energy of the spin-valley polarized (SVP) state with that of an inter-valley coherent (IVC) state that preserves $C_4$ symmetry. Even though at strong coupling (large $\rs$), the correct phase is a Wigner crystal, we can still ask what is the asymptotic behaviour of the Hartree-Fock energy for the IVC and SVP states. As $\rs$ is large, we can minimize the exchange by occupying at most one mode for every momenta. }

\textbf{Spin-valley polarized state:} \textit{We assume that the optimal shape of the Fermi surface is an ellipse with anisotropy $\tilde{\h}=\ee^{2\xi}>1$.} In this case, we can evaluate $E$ explicitly:
\begin{equation}
    \frac{E}{\A} = \frac{\kF^4}{16\p m} \cosh(\s-\xi) - \Ksc(\ee^{2\xi}) \frac{e^2}{4\pi\epsilon}\frac{\kF^3}{3\p^2}
\end{equation}
where $\eta = \ee^{2\s}$ and $\Ksc(x)$ is defined in Eq.~\ref{eq:def:Ksc}.
In (effective) Hartree units, we obtain 
\begin{equation}\label{eq:Enot:StrongCoupling}
    \frac{E}{\Nel } = \frac{\cosh(\s-\xi)}{\rs^2} - \frac{8\Ksc(\ee^{2\xi})}{3\p }\frac{1}{\rs}.
\end{equation}
At large $\rs$, $0< \m \ll 1$, so we can expand the energy around $\xi=0$:
\begin{equation}
    \begin{split}
        \frac{E}{\Nel} &\approx  \frac{E_0}{\Nel} - \frac{\sinh(\s) \xi}{\rs^2} +  \frac{\xi^2}{6\p \rs};\\
        \frac{E_0}{\Nel}&= \frac{\cosh(\s)}{\rs^2} - \frac{8}{3\p \rs},
    \end{split}
\end{equation}
where $\Nel = \A \nel$ and $\Ksc(\ee^{2\xi})\approx 1-\frac{\xi^2}{16}$ for $ \abs{\xi}\ll1$. Minimizing with respect to $\xi$, we obtain 
\begin{equation}
    \xi = \frac{3\p }{\rs} \sinh(\s);\quad \frac{E}{\Nel} = \frac{E_0}{\Nel} - \frac{\sinh[2](\s)}{2/(3\p)}\frac{1}{\rs^3} 
\end{equation}
to leading order in $\rs$.

\textbf{Inter-valley coherent states:} We occupy the modes created by $f^{\dagger}_{\kbs} = \sum_{\a} c^{\dagger}_{\kbs,\alpha}u_{\kbs,\a} $, where $u_{\kbs,\a} = \delta_{\a,1}  \a_{\kbs}+ \delta_{\a,3} \b_{\kbs}$ with $\abs{\a_{\kbs}}^2+\abs{\b_{\kbs}}^2=1$. We consider the following ansatz $\a_{\kbs} = \frac{1+\d_{\kbs}} {\sqrt{2(1+\abs{\d}_{\kbs}^2)}}$ and $\b_{\kbs} = \frac{1-\d_{\kbs}} {\sqrt{2(1+\abs{\d}_{\kbs}^2)}}$ with $\delta_{\kbs}$ small. This state preserves $C_4$ symmetry as long as $\delta_{C_4\kbs} = - \delta_{\kbs}$.

The kinetic energy of the $f^\dagger_{\kbs}$ mode is 
\begin{equation}
    \bar{\eps}(\kbs)=\frac{ k^2}{2m} \cosh(\s) -\frac{ k^2 \cos(2\q) }{2m}\sinh(\s) \frac{2\Re[\d_{\kbs}]}{(1+\abs{\d_{\kbs}^2})}.
\end{equation}
where $k_x +\ii k_y  = k\ee^{\ii\q} $.

The exchange energy per area is  
\begin{equation}
    \begin{split}
     \bar{E}_{x}&=- \frac{1}{2}\int_{\kbs_1,\kbs_2}\tilde{v}(\kbs_1-\kbs_2) \Fmc(\kbs_2,\kbs_1);
\\
        \Fmc(\kbs_2,\kbs_1) 
        &= 1 - \frac{\abs{\delta_{\kbs_1}-\delta_{\kbs_2}}^2}{(1+\abs{\delta_{\kbs_1}}^2)(1+\abs{\delta_{\kbs_2}}^2)}.
    \end{split}
\end{equation}
where $\int_{\kbs} \equiv \int \frac{\dd^2{\kbs}}{(2\p)^2}$ is integrated over the filled states.

\textit{We now assume that the Fermi surface is circular.} As we expect $\delta_{\kbs}$ to be small and of order $1/\rs$, we expand the kinetic energy to linear order in $\delta$ and the exchange energy to order $\d^2$:
\begin{equation}
    \frac{E-E_0}{\A}= - \int_{\kbs: \abs{\kbs}\leq \kF} \frac{\cos(2\q)\kbs^2 \sinh(\s_0)}{2m} \left(2\Re[\d_{\kbs}]\right) + \frac{1}{2}
    \int_{\kbs_1,\kbs_2: \abs{\kbs}_{1(2)}\leq \kF}\tilde{v}(\kbs_1-\kbs_2) \abs{\delta_{\kbs_1}-\delta_{\kbs_2}}^2,
\end{equation}
where $E_0$ is the same as in Eq.~\ref{eq:Enot:StrongCoupling}.
We see that to minimize $E$, $\delta_{\kbs}$ should be real.

We can optimize the energy functional with respect to $\delta_{\kbs}$ because it is quadractic: 
\begin{align}
        \Emc[\delta_{\kbs}] &\equiv - \int_{\kbs: \abs{\kbs}\leq \kF} \frac{\cos(2\q)\kbs^2 \sinh(\s_0)}{2m} \left(2\d_{\kbs}\right) + \frac{1}{2}
    \int_{\kbs_1,\kbs_2: \abs{\kbs}_{1(2)}\leq \kF}\tilde{v}(\kbs_1-\kbs_2) ({\delta_{\kbs_1}-\delta_{\kbs_2}})^2,\label{eq:EnergyFxnl}\\
    \frac{\var \Emc[\delta_{\kbs}] }{\var \delta_{\kbs}} &=-\frac{\cos(2\q)\kbs^2 \sinh(\s_0)}{2m} \left(2\right) +
    2\int_{\pbs: \abs{\pbs}\leq \kF}\tilde{v}(\pbs-\kbs) ({\delta_{\kbs}-\delta_{\pbs}})=0,\label{eq:Optimal}
\end{align}
where $\delta_{\kbs}$ is a function for $\kbs$ in the region $\abs{\kbs}<\kF$. The second factor of $2$ in the exchange is obtained by using $\tilde{v}(\qbs)= \tilde{v}(-\qbs)$. 

We solved Eq.~\ref{eq:Optimal} as follows. We decompose $\delta_{\kbs}$ in angular harmonics and found that due to the rotation invariance of the Coulomb interaction, only the $\cos(2\q)$ component is non-zero. After performing the angular integrations, we can solve for the $k$-dependence of $\delta_{\kbs}$ by iteration. We found that the coefficient of $-\sinh^2(\s_0)/\rs^3$ is approximately equal to $4.122$ which is smaller than $3\p/2\approx 4.712$ for the SVP state. Thus, the SVP state is favored compared to the IVC state.

\paragraph{Deformation of Fermi surface is subleading:}

In the IVC state, let's calculate the leading correction to the kinetic energy upon deforming the Fermi surface. We can do this by setting 
\begin{equation}
    \tilde{k}^2_{\rm{F}}(\q) = \kF^2[1  + \alpha(\q)]
\end{equation}
and $\alpha$ small. The density imposes the condition that
\begin{equation}
    \int_0^{2\p}\int_0^{\tilde{k}(\q)} \frac{k\dd{k}}{2\pi}\frac{\dd{\q}}{2\pi} = \frac{\kF^2}{4\pi} \Longrightarrow \int_0^{2\pi}\alpha(\q) = 0.
\end{equation}

Now the kinetic energy (ignoring the IVC canting) is 
\begin{equation}
    \int_0^{2\p}\int_0^{\tilde{k}(\q)} \frac{\cosh(\sigma)k^2}{2m}\frac{k\dd{k}}{2\pi}\frac{\dd{\q}}{2\pi} = 
    \frac{\cosh(\sigma)}{16 \pi m} \int_0^{2\p} [1+\alpha(\q)]^2\dd{\q}=  
    \frac{\cosh(\sigma)}{16 \pi m} \int_0^{2\p} [1+\alpha(\q)^2]\dd{\q} 
\end{equation}
We thus see to this order the kinetic energy is minimized by $\alpha=0$. As the exchange energy is already a minimum, there is no energy gain to leading order for the IVC. In the presence of canting, the leading correction would be to take $\alpha(\q)  = \alpha_4\cos(4\q) $ but $\alpha_4$ will be of higher order than the canting itself.

\section{Energy of the anisotropic 2-valley 2DEG}\label{app:Energy2DEGAnisotropic}

In this appendix, we will work in units of the effective Hartree $\mathrm{Ha}_* := \frac{\hbar^2}{m \aB^2}$ where $\aB = \left(\frac{m e^2}{4\pi \epsilon \hbar^2}\right)^{-1}$. In this appendix, we report energies per electron to be aligned with the 2DEG literature. 

As mentioned in the main text, our model in Eq.~\ref{eq:Hamiltonian} conserves the number of particles of each flavor $\alpha$, in addition to a $\SU(2)$-spin rotation symmetry for each valley flavor. One can ask what is the ground state in various polarization sectors. For concreteness, we will focus on the states in Tab.~\ref{tab:DefinitionStates}. States 1 through 4 are `fully-polarized' states while states 5 and 6 are `partially-polarized' states. In the latter states, the sectors are parametrized by a number $\zeta_{v} = \frac{n_X-n_Y}{n_X+n_Y}$ and $\zeta_{s} = \frac{n_\ua - n_\da}{n_\ua+n_\da}$. {In the main text, we call the `fully-polarized' states `polarized' for ease of notation.}

\begin{table}[t]
    \centering
    \begin{equation*}
        \begin{array}{|c|c|c|c|c|c|}\hline
       \# &\text{State} & \nu_{\ua X} & \nu_{\da X}& \nu_{\ua Y}& \nu_{\da Y} \\  \hline
       1 & \text{Symmetric (Sym)} & 1/4& 1/4& 1/4& 1/4 \\
       2 & \text{ValleyPol (VP)} & 1/2& 1/2& 0& 0 \\
       3 & \text{SpinPol (SP)} & 1/2 & 0 &1/2 &0 \\
       4 & \text{SpinValleyPol (SVP)} & 1 & 0 & 0 &0 \\\hline
       5 & \text{PartialValleyPol (pVP)} & \frac{1+\zeta_v}{4} & \frac{1+\zeta_v}{4} & \frac{1-\zeta_v}{4} &\frac{1-\zeta_v}{4}  \\
       6 & \text{PartialSpinPol (pSP)} & \frac{1+\zeta_s}{4} & \frac{1-\zeta_s}{4} & \frac{1+\zeta_s}{4} &\frac{1-\zeta_s}{4}  \\\hline
    \end{array}
    \end{equation*}
    \caption{\label{tab:DefinitionStates} Flavor occupancies of the various states we analyze, where   $\nu_{\sigma v}$ is the fraction of the electrons with spin $\sigma=\ua,\da$ and valley $v=X,Y$. $\zeta_{v/s} \in[0,1]$ are the partial polarization parameters. }
\end{table}

On each sector, the ground-state energy per electron admits an expansion 
\begin{align}
    E  &=  E_{\kk} + E_{\xx} + E_{\cc}  \\
    E_{\text{k}} &= \frac{\varepsilon_{\kk}}{\rs^2}\\
    E_{\xx} &= \frac{\varepsilon_{\xx}}{\rs}\\
    E_{\cc} &= \varepsilon_{0,\rr}+ \varepsilon_{0,\xx} + \varepsilon_{0,\xx'} + \varepsilon_{1,\ell}\rs \log(\rs) + \Omc(\rs)
\end{align}
where $\varepsilon_a$ are dimensionless numbers that depend on $\eta$ and the sector (polarization pattern). $E_k$ is the kinetic energy, $E_{\text{x}}$ is the exchange energy, and $E_c$ is the correlation energy. 

We calculated the coefficients $\varepsilon_{\kk}, \varepsilon_{\xx}, \varepsilon_{0,\rr}$, $\varepsilon_{0,\xx}$, $\varepsilon_{0,\xx'}$ and $\varepsilon_{1,\ell}$ for the first four sectors in Tab.~\ref{tab:DefinitionStates}. The results can be found in Tab.~\ref{tab:summaryExpansionParams}. The calculations are shown in the following appendices. $\varepsilon_{\kk}$ comes from the non-interacting kinetic energy in each sector, $\varepsilon_{\xx}$ corresponds to the exchange energy. $\varepsilon_{0,\rr}$ and $\varepsilon_{0,\xx}$ come from standard second-order perturbation theory (diagrams \textbf{b} and \textbf{c} in Fig.~\ref{fig:2PT}). Diagram \textbf{d} vanishes identically at zero temperature due to the pole structure in the $\omega$-plane of the propagators. However, according to Kohn and Luttinger \cite{KohnLuttinger1960}, standard perturbation theory breaks down already at second order in the interactions. Instead, one ought to take the thermodynamic limit before sending the temperature to zero. Doing this gives an additional contribution $\varepsilon_{0,\xx'}$ to the correlation energy that is a combination of effects from diagrams \textbf{a} and \textbf{d} in Fig.~\ref{fig:2PT}.
\begin{table*}[t]
    \centering
    \begin{equation*}
    \begin{array}{|c||c|c|c|c|}\hline
          & \text{Sym} & \text{VP} & \text{SP} & \text{SVP}\\\hline
          \varepsilon_{k} & \frac{1}{4} & \frac{1}{2} & \frac{1}{2} & 1 \\
          \varepsilon_{x} & -\frac{4}{3\pi}\Ksc(\eta) & -\frac{4\sqrt{2}}{3\pi}\Ksc(\eta) & -\frac{4\sqrt{2}}{3\pi}\Ksc(\eta) & -\frac{8}{3\pi}\Ksc(\eta) \\
          \varepsilon_{0,r} & (\log(2)-1)(1+\Dsc(\eta)) &  \log(2)-1 & \frac{(\log(2)-1)}{2 }(1+\Dsc(\eta)) & \frac{(\log(2)-1)}{2} \\
          \varepsilon_{0,x} & \Xsc(\eta)& \Xsc(\eta)& \Xsc(\eta)& \Xsc(\eta) \\
        \varepsilon_{0,x'} & \Vsc(\eta)& \Vsc(\eta)& \Vsc(\eta)& \Vsc(\eta) \\
          \varepsilon_{1,\ell} &  2\left(1-\frac{10}{3\pi}\right)(\Esc(\h)+3\Fsc(\h))  & 
          \sqrt{2}\left(1-\frac{10}{3\pi}\right)\Esc(\h) & \frac{\sqrt{2}}{4}\left(1-\frac{10}{3\pi}\right)(\Esc(\h)+3\Fsc(\h))  &  \frac{1}{4}\left(1-\frac{10}{3\pi}\right)\Esc(\h) \\\hline
    \end{array}
\end{equation*}
    \caption{Summary of the expansion parameters for the first four states in Tab.~\ref{tab:DefinitionStates}. The functions $\Ksc, \Dsc, \Xsc, \Esc, \Fsc$ and $\Vsc$ are defined so that they are positive and are invariant under $\eta \to 1/\eta$. See Eqs.~\ref{eq:def:Ksc},~\ref{eq:DefDsc},~\ref{eq:ExchangeEnergy2PT},~\ref{eq:DefEandF}, \ref{eq:Vsc:def} in the main text, respectively.}
    \label{tab:summaryExpansionParams}
\end{table*}

\subsection{First-order perturbation results}
The calculation of $\varepsilon_k$ and $\varepsilon_{x}$ are standard results \cite{giuliani2008quantum}, as well as the expression for the polarization bubble (a.k.a. the Lindhard function). 

$\varepsilon_{\kk}/\rs^2$ is equal to the non-interacting energy and is equal in general to 
\begin{equation}
    \varepsilon_{\kk}  = \sum_{\alpha} \nu_{\alpha}^2.
\end{equation}
The exchange contribution is 
\begin{equation}
    {\varepsilon_{\xx}} = -\frac{8 \Ksc(\eta)}{3\pi}\sum_{\alpha} \nu^{3/2}_{\alpha},
\end{equation}
where 
\begin{equation}\label{eq:def:Ksc}
    \begin{split}
         \Ksc(\eta) 
         &= \int_{0}^{2\pi}\frac{\dd{\theta}}{2\pi} \frac{1}{\sqrt{\sqrt{\eta}\cos(\theta)^2+ \sin(\theta)^2/\sqrt{\eta}}} \\
         & = \frac{2 \eta^{1/4}\Kell(1-\eta)}{\pi}.
    \end{split}
\end{equation}
$\Kell$ is the elliptic K function. 

The Lindhard function (polarization bubble) of Slater determinant state with densities $\{n_{\a}\}_{\a=1,2,3,4}$ is given by 
\begin{equation}\label{eq:Lindhard}
    \Pi(\ii\Omega,\qbs) = \frac{m}{2\p}\sum_{\a}\left(1- \Re[\sqrt{\left(1+\frac{\ii\Omega}{\eps_{\a}(\qbs)}\right)^2- \frac{8\pi n_{\a}}{m\eps_{\a}(\qbs)}}]\right)
\end{equation}
where $\eps_{\a}(\qbs) = \frac{q_x^2/\sqrt{\h_\a}+q_y^2\sqrt{\h_\a}}{2m}$ is the dispersion. For example, for the symmetric state $n_{\a} = \nel/4$ while for the VP state, we have $n_{1} = n_{2} = \nel /2$ and $n_3=n_4=0$.

It is convenient to express the Lindhard function for a single electron flavor with Fermi momentum $\kF$, mass $m$, and anisotropy $\eta_{\a}$, as follows
\begin{equation}\label{eq:defPi}
    \begin{split}
    \Pi_\a(\ii\Omega,\qbs)  = \frac{m}{2\pi}\left(1-\frac{\Re(\sqrt{z^2-1})}{\Re(z)}\right);\\
        z = \frac{q}{2\kF} I(\eta_\a,\theta) + \frac{\ii \Omega}{\frac{\kF}{m}q I(\eta_\a,\theta)};\\
        I(\eta,\theta) = \sqrt{ \cos(\q)^2/\sqrt{\eta} + \sin(\q)^2\sqrt{\eta}},
    \end{split} 
\end{equation}
where $\qbs = (q \cos(\q), q \sin(\q))$.

\subsection{Second-order ring diagram (\texorpdfstring{$\varepsilon_{0,r}$}{e})}\label{app:Varepsilon:0r}

Diagram \textbf{b} in Fig.~\ref{fig:2PT} can be separated into contributions for pairs of flavors $(\alpha,\alpha')$  as 
\begin{equation}
    \varepsilon_{0,\text{r}} = \sum_{\a,\a'} \varepsilon_{0,\text{r}}(\a,\a'),
\end{equation}
where
\begin{equation}\label{eq:TwoLoopIntegral}
    \varepsilon_{0,\text{r}}(\a,\a') = -\frac{1}{2\nel}\int_{0}^{\infty}\!\!\frac{\dd{\Omega}}{2\pi} \int\!\!\!\frac{\dd^2{\qbs}}{(2\pi)^2} \left[\tilde{v}(\qbs)\right]^2 \Pi_{\a}(\ii\Omega,\qbs)\Pi_{\a'}(\ii\Omega,\qbs)
\end{equation}
Here $\Pi_{\a}(\ii\Omega,\qbs)$ is the polarization bubble for electrons of flavor $\alpha$ {(that depends on the density of electrons in that flavor)}, and $\tilde{v}(\qbs) = \frac{1}{m \aB} \frac{2\pi}{q}$.

To perform the integral in Eq.~\ref{eq:TwoLoopIntegral}, it is convenient to change to the variables $x= \frac{q}{2\kF}$, $y = \frac{\Omega}{\frac{\kF}{m} q}$, where $\kF$ is defined by $\nel =\frac{\kF^2}{4\pi}$. In these new variables, 
\begin{equation}
    \begin{split}
        \int_{0}^{\infty}\!\!\frac{\dd{\Omega}}{2\pi} \int\!\!\!\frac{\dd^2{\qbs}}{(2\pi)^2} &\to \text{Ha}_* \frac{2\nel}{\pi} \left(\frac{4}{\rs}\right)^2 \int_0^{\infty}\!\!\!\dd{x}\int_0^{\infty}\!\!\!\dd{y}\int_{0}^{2\pi}\!\!\frac{\dd{\q}}{2\pi} x^2\\
        \tilde{v}(\qbs) &=  \frac{\rs}{4} \frac{2\pi}{m} \frac{1}{x};\\
        \Pi_\a(\ii\Omega,\qbs)&= \frac{m}{2\pi}\Psi_\a(x,y,\theta);\\
\Psi_{\a}(x,y,\theta)& =  1-\frac{\Re(\sqrt{z_\a^2-\nu_\a})}{\Re(z_\a)}\\
    z_\a &= x I(\eta_\a,\theta) + \ii y / I(\eta_\a,\theta),
    \end{split}
\end{equation}
where $I(\eta,\theta)$ was defined in Eq.~\ref{eq:defPi}. We thus get
\begin{equation}\label{eq:TwoLoopIntegral:Ver2}
    \begin{split}
        \varepsilon_{0,\rr}(\a,\a') 
        &= -\frac{1}{\pi}
    \int_{0}^{\infty}\!\!\!\dd{x}
    \int_{0}^{\infty}\!\!\!\dd{y}
    \int_{0}^{2\pi}\!\!\frac{\dd{\q}}{2\pi} 
    \Psi_{\a}(x,y,\theta)\Psi_{\a'}(x,y,\theta).
\end{split}
\end{equation}
When $\h_\a =\h_{\a'}$, we can make the change of variables $x\to x/I_\a$ and $y\to y I_\a$, which has a trivial Jacobian. This removes the $\theta$ dependence on the $\Psi's$. $\varepsilon_{0,\rr}(\a,\a')$ is thus independent of $\eta$. In particular, it is equal to the isotropic case. The result for the latter case is known for arbitrary partial fillings \cite{Loos2011ExactDirect2nd}.

When $\h_\a = 1/\h_{\a'}=\eta$ and $\nu_\a = \nu_{\a'}=\nu$, we can make a further change of variables $(x,y) \to (\sqrt{\nu} x, \sqrt{\nu} y)$ to express $\varepsilon_{0,\text{r}}(\a,\a')$ in terms of a single function of $\eta$:
\begin{equation}\label{eq:DefDsc}
    \begin{split}
        \Dsc(\eta) &\equiv \frac{2}{\pi (1-\log(2)) }\int_{0}^{\infty}\!\dd{x}\int_{0}^{\infty}\!\dd{y}\int_{0}^{2\pi}\!\frac{\dd{\theta}}{2\pi} \left(1 - \frac{\Re[\sqrt{z_{X}^2-1}]}{\Re[z_{X}]}\right)
    \left(1 - \frac{\Re[\sqrt{z_{Y}^2-1}]}{\Re[z_{Y}]}\right);  \\
    z_A &= x I_A(\q) + \ii y /I_A(\q); \quad [A = X,Y]\\
    I_X(\q) &=\sqrt{\h^{-1/2}\cos(\q)^2+\h^{+1/2}\sin(\q)^2}\\
    I_Y(\q) &= \sqrt{\h^{+1/2}\cos(\q)^2+\h^{-1/2}\sin(\q)^2} 
    \end{split}
\end{equation}
$\Dsc$ satisfies $\Dsc(\eta) = \Dsc(1/\eta)$ and $\Dsc(1)=1$. We numerically evaluate $\Dsc(\eta)$ and show a plot of $\Dsc(\eta)$ in  Fig.~\ref{fig:Direct2Func}. We found $\Dsc(0) \approx 0.6813$. The leading ring contribution is given by  
\begin{equation}
    \varepsilon_{0,\text{r}}(\a,\a') = \begin{cases}
         \frac{\log(2)-1}{2} \nu_{\a}; \quad &\h_\a= \h_{\a'}\\
         \frac{\log(2)-1}{2} \nu_{\a} \Dsc(\h); \quad &\h_\a= 1/\h_{\a'}=\h\\
    \end{cases}
\end{equation}
where we have assumed that $\nu_\a=\nu_{\a'}$.

\begin{figure}[t]
    \centering
    \includegraphics[width=0.45\textwidth]{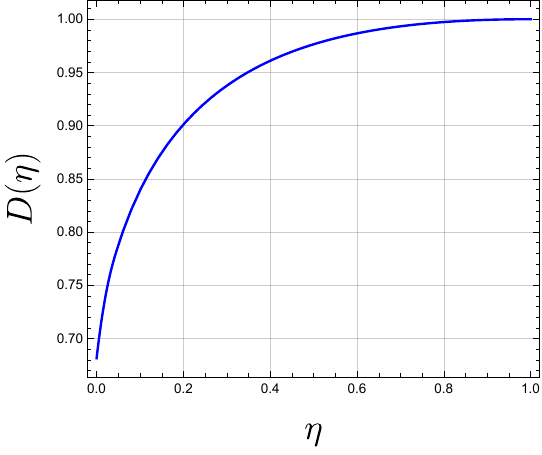}
    \caption{ Plot of $\Dsc(\eta)$ as defined in Eq.~\ref{eq:DefDsc}}
    \label{fig:Direct2Func}
\end{figure}

\onecolumngrid
\subsection{Second-order exchange diagram (\texorpdfstring{$\varepsilon_{0,x}$}{e0x})}

\def\thetaMeasure{\int_{0}^{2\pi}\!\!\frac{\dd{\theta}}{2\pi \abs{\ubs}_{W^{-1}}}}
We modify the calculation of Ref.~\cite{Ishihara1980ExactExchange2ndOrder} to the anisotropic case. We want to evaluate 
\begin{equation}
    \begin{split}
        \lim_{\beta \to \infty}-\frac{\log \Xi }{\beta A}&=\frac{1}{(2\pi)^6 }\int \tilde{v}(\qbs) \Lambda (\qbs) \dd{\qbs}; \\
        \Lambda(\qbs)&= \frac{1}{2}\int_{\RR}\frac{\dd{\Omega}}{2\pi} \dd{\pbs}\dd{\kbs}\tilde{v}(\pbs-\kbs) 
        \frac{\fmf(\pbs+\qbs)-\fmf(\pbs)}{\ii \W +\eps_{\qbs+\pbs}-\eps_{\pbs}}
        \frac{\fmf(\kbs+\qbs)-\fmf(\kbs)}{\ii \W +\eps_{\qbs+\kbs}-\eps_{\kbs}}.
    \end{split}
\end{equation}
$\tilde{v}(\qbs)$ is the potential, $\eps_{\qbs}= \qbs {W}\qbs /2m$ is the dispersion relation, ${W} = \diag(\sqrt{\eta},1/\sqrt{\eta})$. $\fmf(\qbs) = \Theta(\kF^2/(2m)-\eps_{\kbs})$ is the Fermi occupation function at zero temperature. It will be convenient to move the anisotropy from $\eps$ to $\tilde{v}$. We do this by sending $\qbs=(q_x,q_y) \to \sqrt{\bar{W}^{-1}}\qbs= (q_x/\h^{1/4},q_y\h^{1/4})$. The effect is that the potential now becomes $\tilde{v}'(\qbs)= \tilde{v}(\sqrt{\bar{W}^{-1}}\qbs)$ while reducing the dispersions to the isotropic case. From now on, let $m=1/2$ and $\kF=1$.

Going back to real space $\tilde{v}'(\qbs) = \int \dd{\rbs} v'(\rbs) \ee^{\ii\rbs\cdot(\qbs-\kbs)}$ ($u'(\rbs)={u}(\sqrt{W} \rbs)$), we define 
\begin{equation}
    \lambda(\ii\W,\qbs,\rbs)=\int \frac{\fmf(\pbs+\qbs)-\fmf(\pbs)}{\ii \W +\eps_{\qbs+\pbs}-\eps_{\pbs}}\ee^{\ii\pbs\cdot \rbs} \dd{\pbs}
\end{equation}
so that 
\begin{equation}
    \L(\qbs) = \frac{1}{2} \int_{\RR} \frac{\dd\Omega}{2\p} \int\dd{\rbs} v'(\rbs) \lambda(\ii\W,\qbs,+\rbs) \lambda(\ii\W,\qbs,-\rbs)
\end{equation}
Because the integrand in $\L(\qbs)$ is an even function of $\Omega$, we can restrict to $\W>0$. Then
\begin{equation}
    \l(\ii\W,\qbs,\rbs) = -4\pi \ee^{-\ii\qbs\cdot \rbs /2}\int_{0}^{\infty} \ee^{-\Omega \tau} \frac{J_1(\abs{\rbs+2\qbs \t})}{\abs{\rbs+2\qbs \t}} \sin(\frac{\qbs\cdot(\rbs+2\qbs \tau)}{2})\dd{\t} ; \W>0.
\end{equation}
where $J_1$ is a Bessel function of the first kind.

Plugging this in $\L(\qbs)$ and performing the $\W$ integral, gives us
\begin{equation}
    \L(\qbs) = 4\pi \int_{0}^{\infty}\int_{0}^{\infty}\frac{\dd{\t_1}\dd{\t_2}}{\t_1+\t_2}\int_{\RR^2} \dd{\rbs } v'(\rbs) 
    \frac{J_1(\abs{\rbs+2\qbs \t_1})}{\abs{\rbs+2\qbs \t_1}} \sin(\frac{\qbs\cdot(\rbs+2\qbs \t_1)}{2}) 
    \frac{J_1(\abs{-\rbs+2\qbs\t_2})}{\abs{-\rbs+2\qbs \t_2}} \sin(\frac{\qbs\cdot(-\rbs+2\qbs \t_2)}{2})
\end{equation}
In the original work, the authors already replaced $v'(\rbs) = e^2/r$ by this point.

We next multiply by $\tilde{v}'(\qbs)$ and integrate over $\qbs$. Due to the rotation invariance in the original work, the integration over $\theta_{\qbs}$ is not needed and only gives a factor of $2\pi$. In the present case, we can write $\qbs = q \ubs$ with $\ubs = [\cos(\q_{\qbs}),\sin(\q_{\qbs})]$, with $\ubs$ a unit vector. Replacing $\tau_j \to q \tau_j$ and performing the $q$ integral using Eq.~\ref{eq:Integral1} gives us \footnote{Note the  typo in the argument of the last logarithm in Equation 14 of Ref.~\cite{Ishihara1980ExactExchange2ndOrder}. }
\begin{equation}\label{eq:integralLu}
   \begin{split}
       \Imc &= \int_{\RR^2} \L(\qbs) \tilde{v}'(\qbs)\dd{\qbs} \\
       &= 2(2\pi)^2\int_{0}^{\infty}\int_{0}^{\infty}\frac{\dd{\t_1}\dd{\t_2}}{\t_1+\t_2}\int_{0}^{2\pi} \frac{\dd{\theta}}{2\pi} \tilde{v}'(\ubs) v'(\rbs) \frac{J_1(\abs{\rbs+2\ubs \t_1})}{\abs{\rbs+2\ubs \t_1}}  \frac{J_1(\abs{-\rbs+2\ubs\t_2})}{\abs{-\rbs+2\ubs \t_2}}  \log\abs{\frac{2(\t_1+\t_2)}{\ubs\cdot(2\rbs +2 \ubs\cdot(\t_1-\t _2))}}.
   \end{split}
\end{equation}
The difference with the original work is that $\tilde{v}'(\ubs) v'(\rbs) = \frac{2\pi e^4}{\sqrt{\ubs\cdot W^{-1}\cdot \ubs}\sqrt{\rbs\cdot W\cdot \rbs}} = \frac{2\pi e^4}{\abs{\ubs}_{W^{-1}}\abs{\rbs}_{W}}$ where we defined $\abs{\Rbs}_{A}\equiv \sqrt{\Rbs\cdot A \cdot \Rbs}$ for a 2 by 2 matrix $A$.

Next, we consider the change of variables $\rbs\to \rbs (\t_1+\t_2)+ \ubs(\t_2-\t_1)$ followed by $\tau_1 = \frac{1+\xi}{2}\sigma$ and $\tau_2 = \frac{1-\xi}{2}\sigma$ such that $-1<\xi<1 $ and $0<\s <\infty$ and $\dd{\t_1}\dd{\t_2}= \frac{\sigma}{2} \dd{\xi}\dd{\sigma}$ so that the original $\rbs$ goes to $\sigma(\rbs - \xi \ubs)$. Then
\begin{equation}
    \Imc=(2\pi)^3e^4\thetaMeasure \int_{\RR^2} \dd{\rbs} \int^{+1}_{-1} \frac{\dd{\xi}}{\abs{\rbs+ \xi \ubs}_{W}} \log\abs{\frac{1}{\rbs\cdot \ubs}} \int_{0}^{\infty} \frac{\dd{\s}}{\s} \frac{J_1(\s\abs{\rbs+\ubs})}{\abs{\rbs+\ubs}}\frac{J_1(\s\abs{\rbs-\ubs})}{\abs{\rbs-\ubs}}
\end{equation}
We can evaluate the $\sigma$ and $\xi$ integrals using Eqs.~\ref{eq:Integral2} and ~\ref{eq:Integral3}:
\begin{equation} 
\begin{split}
    \frac{\Imc}{(2\pi)^3 e^4}&=\thetaMeasure\int_{-\infty}^{\infty}\int_{-\infty}^{\infty}\dd{x}\dd{y} \dd{\rbs} L(\sqrt{W}(x\ubs+y\vbs),\sqrt{W}\ubs) \frac{\log\abs{\frac{1}{\rbs\cdot\ubs}}}{2\max(\abs{\rbs+\ubs}^2,\abs{\rbs-\ubs}^2)}
\end{split}
\end{equation}
where 
\begin{equation}
    L(\abss,\bbs)= 
    \frac{1}{\abs{\bbs}} \arcsinh\left(\frac{\bbs\cdot(\abss +  \bbs)}{\abs{\bbs}\abs{\abss\times\bbs}}\right)
    -
    \frac{1}{\abs{\bbs}} \arcsinh\left(\frac{\bbs\cdot(\abss- \bbs)}{\abs{\bbs}\abs{\abss\times\bbs}}\right).
\end{equation}
To proceed, we take the following coordinates $\rbs = x \ubs  + y \vbs$  where $\vbs = [-\sin(\theta),\cos(\theta)]$ so that $\vbs \cdot \ubs=0$. {One can check that the Jacobian is trivial by writing $\rbs_{\text{old}} = R_\theta \rbs_{\text{new}}$ and then $\dd^2{\rbs_{\text{old}}}\wedge \dd{\theta} = \det(R_{\theta})\dd^2{\rbs_{\text{new}}}\wedge \dd{\theta} $ and noting that $\det(R_\theta)=1$}. Then
\begin{equation} 
    \frac{\Imc}{(2\pi)^3 e^4}= \thetaMeasure \int_{-\infty}^{\infty}\int_{-\infty}^{\infty}\dd{x}\dd{y} \dd{\rbs} L(\sqrt{W}(x\ubs+y\vbs),\sqrt{W}\ubs) \frac{\log\abs{\frac{1}{\rbs\cdot\ubs}}}{2\max(\abs{\rbs+\ubs}^2,\abs{\rbs-\ubs}^2)}
\end{equation}
We split the $\rbs$ integral in four quadrants by introducing signs $s_{x/y}=\pm 1$ and integrate over $x>0,y>0$.
\begin{equation}
    \begin{split}
        \frac{\Imc}{(2\pi)^3 e^4}&= \sum_{s_x,s_y}\thetaMeasure\int_{0}^{\infty}\int_{0}^{\infty}\dd{x}\dd{y} L(\sqrt{W}(x s_x \ubs+y s_y\vbs),\sqrt{W}\ubs) \frac{-\log(x)}{2[(x+1)^2+y^2]}
    \end{split}
\end{equation}
where we used that $\max(\abs{\rbs+\ubs}^2,\abs{\rbs-\ubs}^2)= y^2+(x+1)^2$. {To show this note that $\max((x+ 1)^2, (x-1)^2)= \max(x^2 +2x +1, x^2-2x +1)= x^2+1 + 2\max(x,-x) =x^2+2\abs{x}+1 = (\abs{x}+1)^2 $ and we can then replace $x$ by $x s_x$. }

We next want to massage the expressions to be able to perform some of the integrals over infinite regions. We start by noting that 
\begin{equation}
    L(\sqrt{W}(xs_x\ubs + ys_y\vbs),\sqrt{W}\ubs)= \frac{1}{\abs{\ubs}_{W}}\sum_{k=\pm 1}  k \arcsinh\left(\frac{\ubs\cdot W \cdot ( (s_x x+k)\ubs + s_y y \vbs )}{\abs{\ubs}_W \abs{y}}\right)
\end{equation}
where we have used that $\abs{(A \abss)\times (A\bbs)} = \abs{\abss \times \bbs}$ when $\det(A)=1$. We make a further change of variables $y\to \rho(1+x)$:
\begin{equation}
    L(\sqrt{W}(xs_x\ubs + ys_y\vbs),\sqrt{W}\ubs)= \frac{1}{\abs{\ubs}_{W}}\sum_{k=\pm 1}  k\arcsinh\left(\frac{\ubs\cdot W \cdot (\frac{ (s_x x+k)}{ (1+x)}\ubs + s_y \rho \vbs )}{\abs{\ubs}_W \rho}\right)
\end{equation}
followed by $\zeta = \frac{x-1}{x+1}$ so that $\frac{s_x x +k}{1+x} = s_x \zeta^{\frac{1-s_x k}{2}}$. Then 
\begin{equation}
    L(\sqrt{W}(xs_x\ubs + ys_y\vbs),\sqrt{W}\ubs)
    = \frac{1}{\abs{\ubs}_{W}}\sum_{l=0,1}  (-1)^l\arcsinh\left(\frac{\ubs\cdot W \cdot (\z^l\ubs + s_xs_y \rho \vbs )}{\abs{\ubs}_W \rho}\right)
\end{equation}
where we changed variables in the sum from $k$ to $l$ defined by the relation $ks_x = (-1)^l$. The last expression only depends on $s_xs_y$ so we redefine $s=s_xs_y$ and replace the sums as $\sum_{s_x,s_y} \to 2\sum_{s}$.

Then 
\begin{equation}\label{eq:DefiningIFG}
    \begin{split}
        \frac{\Imc}{(2\pi^3)e^4} &= -2\thetaMeasure  \int_{-1}^{+1} \frac{\dd{\zeta}}{1-\zeta} \log\left(\frac{1+\z}{1-\z}\right) F(\theta, \zeta)\\
        F(\theta, \zeta) &= \int^{\infty}_{0}  \left[G(\theta,1,\rho)-G(\theta,\zeta,\rho)\right]\frac{\dd{\rho}}{1+\rho^2} \\
        G(\theta,\zeta,\rho) &= \frac{1}{2}\sum_{s=\pm 1}  \frac{1}{\abs{\ubs}_{W}}\arcsinh\left(\frac{\ubs\cdot W \cdot (\z\ubs + s \rho \vbs )}{\abs{\ubs}_W \rho}\right)
    \end{split}
\end{equation}
where the $\theta$ dependence of $G$ is via $\ubs$ and $\vbs$. $G$ is an odd function of $\zeta$ which implies  
\begin{equation}\label{eq:RelationForF}
    F(\theta,\zeta)+F(\theta,-\zeta) = 2F(\theta,0).
\end{equation}

Note that 
\begin{equation}
    \pdv{G(\theta,\zeta,\rho)}{\zeta} =\frac{1}{2}\sum_{s=\pm 1} \frac{1}{\sqrt{\r^2+ ( \abs{\ubs}_W \zeta+ \b s\r )^2}}; \beta = \frac{\ubs \cdot W\cdot \vbs}{\abs{\ubs}_W}.
\end{equation}
We can then use the $s$ variable to extend the integral of $\r$ to negative numbers and identify $\abss = \zeta\abs{\ubs}_W [1,0]$ and $\bbs = \beta [1,0]$ in the definite integral in Eq.~\ref{eq:Integral5} to obtain
\begin{equation}\label{eq:Exchange2_AngularPart}
f(\theta,\zeta)\equiv\int_{0}^{\infty}\pdv{G(\theta,\zeta,\rho)}{\zeta} \frac{\dd{\rho}}{1+\r^2}= \Re\left( \frac{\arctanh\left(\frac{\sqrt{1+(\b + \ii \zeta\abs{\ubs}_W)^2}\sqrt{1+\b^2}}{1+\b(\b + \ii \zeta\abs{\ubs}_W) }\right)}{\sqrt{1+(\b + \ii \zeta\abs{\ubs}_W)^2}}\right) ; \beta = \frac{\ubs \cdot W \cdot \vbs}{\abs{\ubs}_W}
\end{equation}
From the integral definition of $f(\theta,\zeta)$, one can see that $f(\theta,\zeta)$ is finite for $\zeta\in (0,1]$ and $f(\theta,\zeta \to 0)$ diverges logarithmic in $\zeta$. Therefore, assuming there is no issue in swapping the $\rho$ and $\zeta$ integrations, we find that for $\zeta \in [0,1]$
\begin{equation}
    F(\theta,\zeta) = \int_{\zeta}^1 f(\theta,\zeta')\dd{\zeta'}.
\end{equation}
Using the properties of $f$, we deduce that $F$ is finite for all values of $\zeta \in [0,1]$ and $F(\theta,\zeta) \propto (1-\zeta)$ for $\zeta $ close to $1$. 

Using the relation $F(\theta,\zeta)+ F(\theta,-\zeta) = 2F(\theta,0)$ in Eq.~\ref{eq:RelationForF}, we get:
\begin{equation}
    \begin{split}
     \frac{\Imc}{-2(2\pi^3)e^4} &= \thetaMeasure\left[ \int_{0}^1 \frac{2\dd{\zeta}}{1-\z^2} \log\frac{1-\z}{1+\z} F(\theta,\zeta) + 2 F(\theta,0) c\right]\\
     c &= \int_{0}^1 \frac{\dd{\zeta}}{1+\z}\log \frac{1-\z}{1+\z} = - \frac{\pi^2}{12}
    \end{split}
\end{equation}
Next, we use that $\frac{\dd}{\dd \zeta} \log\frac{1-\z}{1+\z}  = \frac{2}{1-\z^2}$ to integrate by parts 
\begin{equation}
        \frac{\Imc}{-2(2\pi^3)e^4} =\thetaMeasure  \left[ -\frac{1}{2}\int_{0}^1 \dd{\zeta} \left(\log\frac{1-\z}{1+\z}\right)^2 \pdv{F(\theta,\zeta)}{\z} -\frac{\pi^2}{6 } F(\theta,0) \right]
\end{equation}
The function $\left(\log\frac{1-\z}{1+\z}\right)^2 {F(\theta,\zeta)} $ vanishes at $\zeta=0,1$ so only the integral term of the integration by parts survive. Next using the integral representation of $F$ in terms of $f$, allows us to write exchange integral as the integration over an angle and variable in the interval $(0,1)$: 
\begin{equation}\label{eq:FinalImc}
        \frac{\Imc}{-2(2\pi^3)e^4} =\thetaMeasure  \left[ \frac{1}{2}\int_{0}^1 \dd{\zeta} \left(\log\frac{1-\z}{1+\z}\right)^2 f(\theta,\z) -\frac{\pi^2}{6 } \int_{0}^1 f(\theta,\zeta) \dd{\zeta} \right]
\end{equation}

For the isotropic system ($W = \Id$), $\beta=0$ and $\abs{\ubs}_W =1$. So that the function $f$ reduces to $\Re( \arctanh(\sqrt{1-\z^2})/\sqrt{1-\z^2})  $ which equals $\log(\frac{1+\sqrt{1-\z^2}}{1-\sqrt{1-\z^2}})/2\sqrt{1-\z^2}$ for $\z\in (-1,1)$. This is consistent with the expression for $F'(z)$ in Appendix B of Ref.~\cite{Ishihara1980ExactExchange2ndOrder}.

Lastly, by comparing Eq.~\ref{eq:FinalImc} to Equation 22 in Ref.~\cite{Ishihara1980ExactExchange2ndOrder} and their value of exchange energy, we find that 
\begin{equation}\label{eq:ExchangeEnergy2PT}
    \begin{split}
       \varepsilon_{0,\text{x}} &= \Xsc(\eta) \equiv\frac{1}{6}\int_{0}^{1} \bar{f}(\zeta)\times \left(1-\frac{3}{\pi^2}\left(\log \frac{1-\z}{1+\z}\right)^2\right)\dd{\zeta}; \\
        \bar{f}(\zeta)&=\left(\int_{0}^{2\pi } \frac{\dd{\theta}}{2\pi} \frac{f(\zeta,\theta)}{\sqrt{\cos(\theta)^2 /\sqrt{\eta} + \sin(\theta)^2\sqrt{\eta}}}  \right).
    \end{split} 
\end{equation}
where $f(\zeta,\theta)$ is defined in Eq.~\ref{eq:Exchange2_AngularPart}.

We numerically evaluate $\Xsc(\eta)$ for $1/32<\eta \leq 1$ and plot the result in Fig.~\ref{fig:Exchg2Func}. We found $\Xsc(1) \approx 0.1143$ which is consistent to the analytic value $\beta(2) -\frac{8}{\pi^2}\beta(4)$ cited in Ref.~\cite{loos2016uniform}. ( $\beta(s)$ is the Dirichlet $\beta$ function.) 
\begin{figure}[h]
    \centering
    \includegraphics[width=0.45\textwidth]{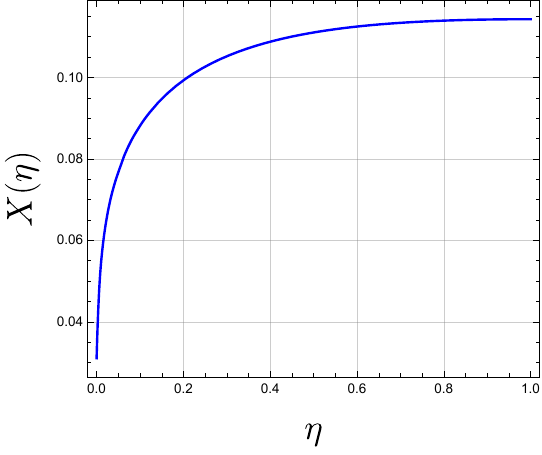}
    \caption{ Plot of $\Xsc(\eta)$ as defined in Eq.~\ref{eq:ExchangeEnergy2PT}}
    \label{fig:Exchg2Func}
\end{figure}

\subsection{Anomalous contribution}

According to Kohn and Luttinger\cite{KohnLuttinger1960}, the computation of the many-body ground state energy at zero temperature of a non-spherically symmetric fermionic system is subtle. In particular, the energy obtained by taking the $T \to 0 $ limit before the thermodynamic limit ($V\to \infty$ with $\nel = \frac{N}{V}$) will differ from the energy obtained taking the thermodynamic limit first and then letting $T\to 0$. This is not an issue for spherically symmetric systems as shown by Kohn and Luttinger in Ref.~\cite{LuttingerWard1960}.

The difference between both ways of calculating the energy can be traced back to the renormalization of the chemical potential. To 1PT, there is no difference. The leading difference between the two methods is (Eq.~30 of \cite{KohnLuttinger1960}):
\begin{equation}
    E_{2}' = \frac{1}{2}\left(\sum_{s}\delta(\eps_s -\mu_{0}) \right)\times \left(\frac{\left[\sum_{s}\delta(\eps_s-\mu_0) \Qsc_s\right]^2}{\left[\sum_{s}\delta(\eps_s-\mu_0)\right]^2}
    -
    \frac{\sum_{s}\delta(\eps_s-\mu_0) \Qsc_s^2}{\sum_{s}\delta(\eps_s-\mu_0)}\right).
\end{equation}
The sums are over the single-particle states labeled by $s$ and $\mu_0$ is the chemical potential. The state $s$ has a flavor $\alpha_s$ index and momentum $\qbs_{s}$. $\Qsc_s$ is given by
\begin{equation}
    \begin{split}
        \Qsc_{(\a,\qbs)}&= \int_{\eps_{\a}(\kbs) < \mu_0} \frac{\dd^2{\kbs}}{(2\pi)^2} \tilde{v}(\kbs-\qbs).
    \end{split}
\end{equation}
Let $\mu_0 = \frac{\kF^2}{2m}$, $q_x = \h_{\a}^{1/4}\kF \bar{q}\cos(\varphi)$ and $q_y = \h_{\a}^{-1/4}\kF \bar{q}\sin(\varphi)$, then $\Qsc_{\a,\qbs} = \frac{\kF}{m\aB}\vmf(\bar{q},\varphi;\h_\a)$ with 
\begin{equation}
    \vmf(\bar{q},\varphi; \eta) = \int_{0}^{2\p}\frac{\dd{\q}}{2\p}\frac{\Re[\sqrt{1- \bar{q}^2 \cos(\q-\varphi)^2}]}{\sqrt{\cos(\q)^2/\sqrt{\h}+ \sin(\q)^2\sqrt{\h}}}.
\end{equation}
The above expression is obtained by representing $\tilde{v}(\kbs-\qbs) = \int \dd{\rbs}\ee^{\ii\kbs\cdot\rbs} \ee^{-\ii\qbs\cdot\rbs} \frac{1}{m\aB \abs{\rbs}}$ and doing the $\kbs$ integral in the expression for $\Qsc_{\a,\qbs}$. It turns out that we can evaluate $\smf(\varphi;\h)\equiv \vmf(\bar{q}=1,\varphi;\h)$ analytically:
\begin{equation}
    \begin{split}
        \smf(\varphi;\h) &= \int_{\varphi}^{\varphi+\p}\frac{\dd{\q}}{\p} \frac{\sin(\q-\varphi)}{\sqrt{\cos(\q)^2/\sqrt{\h}+ \sin(\q)^2\sqrt{\h}}}\\
        &= \frac{2}{\p} \left( \cos(\varphi)^2\frac{h(\sqrt{1-1/\h}\cos(\varphi))}{\h^{+1/4}} + \sin(\varphi)^2\frac{h(\sqrt{1-\h}\sin(\varphi))}{\h^{-1/4}}\right);\\
        h(z) &\equiv \frac{\arcsin(z)}{z}; \quad \abs{\Re[z]} \leq 1.
    \end{split}
\end{equation}
We show $\smf(\varphi;\eta)$ in Fig.~\ref{fig:smf:app}.

\begin{figure}
    \includegraphics[width=0.9\textwidth]{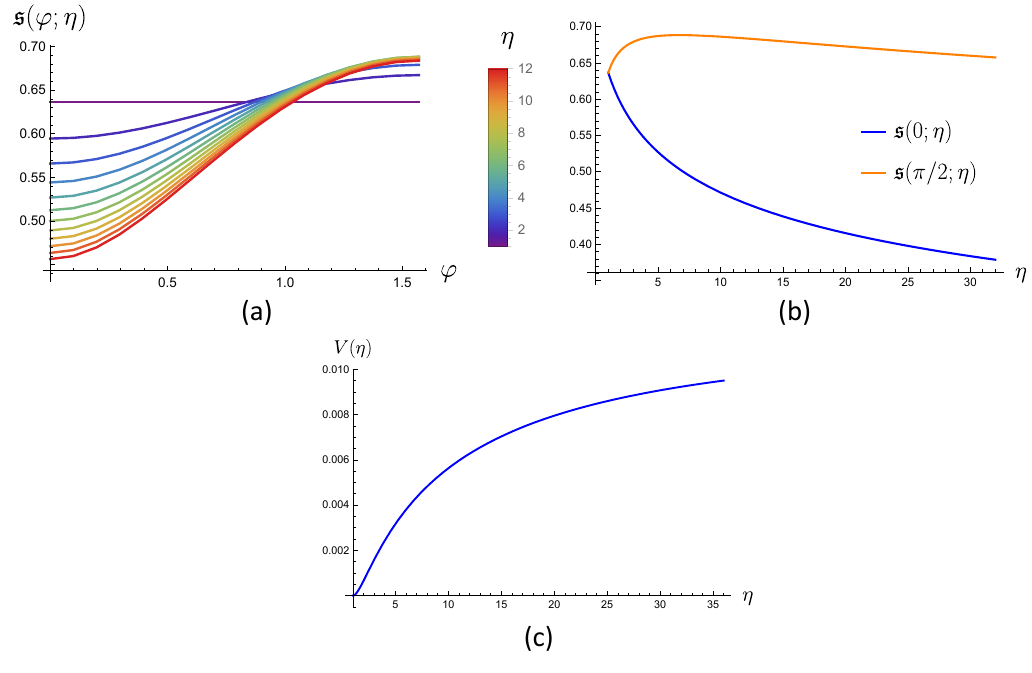}
    \caption{\textbf{(a)} Shows the angular dependence of $\smf$ for several values of $\eta$. \textbf{(b)} Shows the expected minima and maxima of $\smf$ for a range $\eta$'s. \textbf{(c)} $\Vsc(\eta$ is the variance of $\smf(\varphi;\eta)$ as a function of the angle $\varphi$.} 
    \label{fig:smf:app}    
\end{figure}

We now move back to the calculation of $E'$. In the thermodynamic limit, we can replace the sums over states by sums over flavors and integrals over momentum:
\begin{equation}
    \sum_s \delta(\eps_s - \mu_0)(\dots)  \to  \A \sum_\a \int\frac{\dd^2\qbs}{(2\p)^2}\delta(\eps_{\qbs}-\mu_0)(\dots) \to \frac{\A m}{2\pi}\sum_\a \int_0^{2\pi}\frac{\dd\varphi}{2\pi}(\dots)\eval_{\qbs =\sqrt{2m\mu_0}[\h_{\a}^{1/4}\cos(\varphi),\h^{-1/4}_{\a}\sin(\varphi)]}
\end{equation}
where in the last step, we used the following change of variables  $\qbs =q[\h_{\a}^{1/4}\cos(\varphi),\h_{\a}^{-1/4}\sin(\varphi)]$ and evaluated the $q$ integral so the $(\dots)$ are evaluated at $q = \sqrt{2m \mu_0}$. 

For the `diagonal' states, we obtain 
\begin{equation}\label{eq:E2xp}
    \frac{E_{2}'}{\A} = \frac{\nel}{m\aB^2} \expval{ [\smf(\varphi_1;\h)-\expval{\smf(\varphi_2;\h)}_{\varphi_2}]^2 }_{\varphi_1},
\end{equation}
where $\expval{f(\varphi)}_{\varphi}\equiv \int_0^{2\pi}\frac{\dd{\varphi}}{2\pi} f(\varphi)$. We have used that $\expval{\smf(\varphi;\eta)}_{\varphi}=\expval{\smf(\varphi;1/\eta)}_{\varphi}$. We thus see that $E_2'$ is the same for all the diagonal states. In Fig.~\ref{fig:smf:app} (c), we show the $\eta$ dependence of the variance
\begin{equation}\label{eq:Vsc:def}
    \Vsc(\eta)\equiv 
    \int_{0}^{2\p}\!\frac{\dd{\varphi_1}}{2\pi}
    \int_{0}^{2\p}\!\frac{\dd{\varphi_2}}{2\pi} \left[\smf(\varphi_1;\eta)^2 - \smf(\varphi_1;\eta)\smf(\varphi_2;\eta)\right].
\end{equation}
$\Vsc(\eta)$ seems to be and increasing function, but it should eventually decrease because $0\leq \smf(\varphi;\eta)\leq \Ksc(\eta)$ and $\Ksc(\eta)$ tends to zero in the large anisotropy limit $\abs{\log(\eta)}\to \infty$.

Dividing Eq.~\ref{eq:E2xp} by $\tfrac{\nel}{(m\aB^2)}$, we obtain that the anomalous exchange contribution to the energy per particle, $E_{\text{x}'}$ is 
\begin{equation}
    E_{\text{x}'} = \varepsilon_{x'}= \Vsc(\h).  
\end{equation}

\subsection{Random phase approximation for the energy}

We move on to the evaluation to the correlation energy within the RPA. The correlation energy per electron is 
    \begin{align}\label{eq:RPAenergy:app}
        \frac{E_{\rr}}{\Nel}
        &=  \frac{1}{\nel}\int_{0}^{\infty}\!\!\frac{\dd{\Omega}}{2\pi} \int\!\!\!\frac{\dd^2{\qbs}}{(2\pi)^2} \log(1+\tilde{v}(\qbs) \Pi(\ii\Omega,\qbs)) - \tilde{v}(\qbs) \Pi(\ii\Omega,\qbs)\\
        & = - \frac{1}{\nel}\int_{0}^{\infty}\!\!\frac{\dd{\Omega}}{2\pi} \int\!\!\!\frac{\dd^2{\qbs}}{(2\pi)^2} \int_{0}^1 \!\!\dd{\l}\frac{\l \left[\tilde{v}(\qbs) \Pi(\ii\Omega,\qbs)\right]^2}{1+ \l  \tilde{v}(\qbs) \Pi(\ii\Omega,\qbs)}
    \end{align}
In the $(x,y,\q)$ variables, we get instead 
\begin{align}\label{eq:cRPA:xyq}
\frac{E_{\rr}}{\Nel} &=
\frac{2}{\pi}
\int_{0}^{\infty}\!\!\!\dd{x}
\int_{0}^{\infty}\!\!\!\dd{y}
\int_{0}^{2\pi}\!\!\frac{\dd{\q}}{2\pi} \left(\frac{4x}{\rs}\right)^2\left[
\log(1 + \frac{\rs}{4x}\Psi(x,y,\q) ) - \frac{\rs}{4x}\Psi(x,y,\q) \right] \\
&= -\frac{1}{\pi}
\int_{0}^{\infty}\!\!\!\dd{x}
\int_{0}^{\infty}\!\!\!\dd{y}
\int_{0}^{2\pi}\!\!\frac{\dd{\q}}{2\pi} \int_{0}^1 \!\!\dd{\l}
\frac{2\l \left[\Psi(x,y,\q)\right]^2}{1+ \l \frac{\rs}{4x}\Psi(x,y,\q) } 
      \end{align}
where $\Psi = \sum_{\a=1}^{4}\Psi_\a$. Using the fact that
\begin{equation}
    \frac{\left[\Psi(x,y,\q)\right]^2}{1+ \l \frac{\rs}{4x}\Psi(x,y,\q) }  \leq \left[\Psi(x,y,\q)\right]^2; \quad (\l\geq 0),
\end{equation}
together with Eq.~\ref{eq:cRPA:xyq} and Eq.~\ref{eq:TwoLoopIntegral:Ver2}, we find that
\begin{equation}
     \varepsilon_{0,r}\leq  E_{\rr} \leq  0.
\end{equation}

We can use the change of variable $(x,y)\to (x/\sqrt{2},y/\sqrt{2})$ to show that 
\begin{equation}
    \begin{split}
        E_{\rr}^{\text{Sym}}(\rs) &= 2E_{\rr}^{\text{SP}}(2\sqrt{2}\rs) \\
        E_{\rr}^{\text{VP}}(\rs) &= 2E_{\rr}^{\text{SVP}}(2\sqrt{2}\rs)
    \end{split}
\end{equation}
The overall factor of two is $(1/2)*(2)^2$, where $1/2$ comes from the Jacobian, and the $(2)^2$ comes from $\Psi^2$. Similarly, the factor of $\sqrt{2}$ in the argument comes from $x$ and the factor of $2$ comes from $\Psi$. 

The RPA correlation energy for VP can be written as an average of the RPA energy of the isotropy 2DEG ($\eta=1$) by noting that we can make the change of variables to $(x,y)\to (x /I , y I)$ with $I = I (\eta,\q)$. Then 
\begin{equation}\label{eq:RPA_VP:2DEG}
    E_{\rr}^{\text{VP}}(\eta,\rs) = \int_0^{2\pi}\frac{\dd{\q}}{2\pi}E_{\rr}^{\text{2DEG}}(\rs I(\h,\q))
\end{equation}
Using this formulation, it is easier to evaluate the integral for large $\eta$, once we know the large $\rs$ behaviour of the RPA integral \cite{ioriatti1981ground}. We present the results in Fig.~\ref{fig:ScalingLargeAnisotropyVP}.

\begin{figure*}[t]
    \centering
    \includegraphics[width=0.7\textwidth]{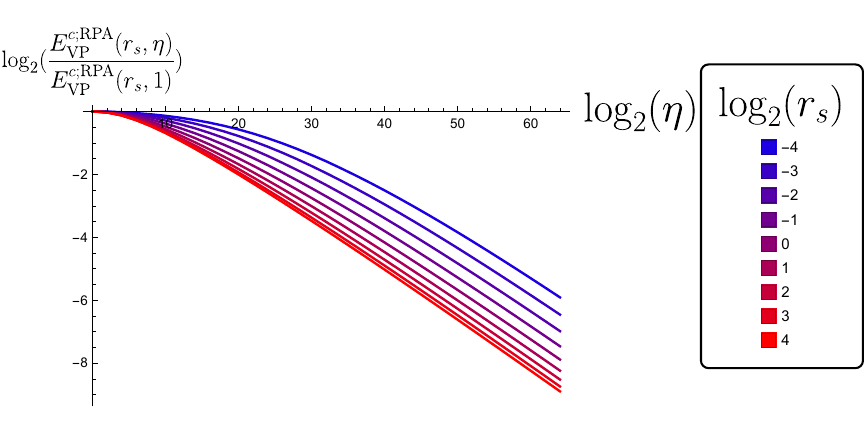}
    \caption{Scaling behaviour for the RPA correlation energy in the VP sector evaluated using Eq.~\ref{eq:RPA_VP:2DEG}. }
    \label{fig:ScalingLargeAnisotropyVP}
\end{figure*}

\subsubsection{Evaluation of \texorpdfstring{$\varepsilon_{1,\ell}$}{vareps1l}}\label{app:LogarithmicRPA}

In order to evaluate $\varepsilon_{1,\ell}$, we fix $\eta$ and study the small $\rs$ behaviour of $E_{\rr}$. Following Ref.~\cite{Giuliani2007ScalingLogCorrection}, the $\rs \log(\rs)$ term comes from the small $x$ region of $E^{\rr} - \varepsilon_{0,r}$. We start from the expression
\begin{equation}
    E_{\rr} - \varepsilon_{0,\rr}=  \frac{\rs}{4\pi}
    \int_{0}^{\infty}\!\!\!\dd{x}
    \int_{0}^{\infty}\!\!\!\dd{y}
    \int_{0}^{2\pi}\!\!\frac{\dd{\q}}{2\pi} \int_{0}^1 \!\!\dd{u}
\frac{2u^2 \left[\Psi(x,y,\q)\right]^3}{x+ u \frac{\rs}{4}\Psi(x,y,\q) }.
\end{equation}
We then set $x=0$ in the argument of $\Psi(x,y,\theta)$ and use that
\begin{equation}
    \int_0^{x_0} \dd{x}  \frac{1}{x+ u \frac{\rs}{4} \Psi} = -\log(\rs)+\dots
\end{equation}
where $x_0$ is a cutoff and $\dots$ refer to subleading terms in $\rs$. Doing the $u$ integral and dividing by $\rs\log(\rs)$, we obtain 
\begin{equation}\label{eq:LogCoulomb}
    \varepsilon_{1,\ell} = -\frac{1}{6\pi}\int_{0}^{\infty}\!\!\!\dd{y}
    \int_{0}^{2\pi}\!\!\frac{\dd{\q}}{2\pi} 
    \left[\Psi(0,y,\theta)\right]^3.
\end{equation}
We have $\Psi(0,y,\theta) = \sum_\a R(\frac{y}{\sqrt{\nu_{\a} I(\eta_\a,\theta)}})$ where $R(y) = 1-\frac{1}{\sqrt{1+1/y^2}}$.

Therefore, Eq.~\ref{eq:LogCoulomb} can be written in terms of the function \footnote{Mathematica gives an expression for the indefinite integral. So in principle, one could give an explicit form of the function. } 
\begin{equation}
    \Gmc(s_1,s_2,s_3) = \int_{-\infty}^{+\infty} R(u/s_1)R(u/s_2)R(u/s_3) \dd{u}.
\end{equation}
which is symmetric under the permutation of its arguments. When two arguments of $\Gmc$ are equal, the function can be simplified to \cite{Giuliani2007ScalingLogCorrection}
\begin{equation}\label{eq:DefinitionVarPhi}
    \begin{split}
        \Fmc(s_1,s_2) &\equiv \Gmc(s_1,s_1,s_2) = s_1 \varphi(s_2/s_1); \\
        \varphi(x) &= 4(1+x)-\pi - 4\Eell(1-x^2) + 2 \frac{\arcsin(\sqrt{1-x^2})}{\sqrt{1-x^2}}.
    \end{split}
\end{equation}
with $\varphi(1) = 10-3\pi$. Then, we can write 
\begin{equation}\label{eq:varepsilon:0l}
    \begin{split}
        \varepsilon_{1,\ell} 
        &= \sum_{\a_1,\a_2,\a_3}\varepsilon_{0,\ell}(\a_1,\a_2,\a_3); \\
        \varepsilon_{1,\ell}(\a_1,\a_2,\a_3)&=
        -\frac{1}{12\pi} \int_{0}^{2\pi}\!\!\frac{\dd{\q}}{2\pi} \Gmc(k_{\a_1}(\q), k_{\a_2}(\q), k_{\a_3}(\q)) ;\\
        k_{\a}(\q)&= \sqrt{\nu_\a}I(\eta_\a,\q).
    \end{split}
\end{equation}
As we only consider fully polarized states $\Gmc$ at most two of the arguments of $\Gmc$ are different. We can write the results in terms of 
\begin{equation}\label{eq:DefEandF}
    \begin{split}
        \mathscr{E}(\eta) &= \int_{0}^{2\pi}\frac{\dd{\q}}{2\pi} \sqrt{\sqrt\h \cos(\q)^2+\sin(\q)^2/\sqrt\h}= \frac{2}{\pi \h^{1/4}}\Eell(1-\h)\\
        \mathscr{F}(\eta) &= \int_{0}^{2\pi}\frac{\dd{\q}}{2\pi} \sqrt{\sqrt\h \cos(\q)^2+\sin(\q)^2/\sqrt\h} \frac{\varphi\left(\sqrt{\frac{\h \sin(\q)^2+\cos(\q)^2}{\h \cos(\q)^2+\sin(\q)^2}}\right)}{\varphi(1)}
    \end{split}
\end{equation}
Numerical evaluation shows that $\Esc(\eta)>\Fsc(\eta)$ for $\eta\in(0,1)$. Using that $\Gmc(as_1,as_2,as_3) = a\Gmc(s_1,s_2,s_3)$ for $a>0$, we obtain when all the anisotropies are different ($\eta_{\a_1}=\eta_{\a_2}=\eta_{\a_3}$),
\begin{equation}
    \varepsilon_{1,\ell}(\a_1,\a_2,\a_3) = - \frac{10-3\pi}{12\pi}\sqrt{\nu_\a} \Esc(\eta).
\end{equation}
When the anistropies are different, we obtain instead
\begin{equation}
    \varepsilon_{1,\ell}(\a_1,\a_2,\a_3) = - \frac{10-3\pi}{12\pi}\sqrt{\nu_\a} \Fsc(\eta).
\end{equation}
Combining the above two results with Eq.~\ref{eq:varepsilon:0l}, we obtained the results nin Tab.~\ref{tab:summaryExpansionParams}. We plot the functions $\mathscr{E}$, $\mathscr{F}$ and $\mathscr{F}/\mathscr{E}$ in Fig.~\ref{fig:Log}.

\begin{figure}[h]
    \centering
    \includegraphics[width=0.8\textwidth]{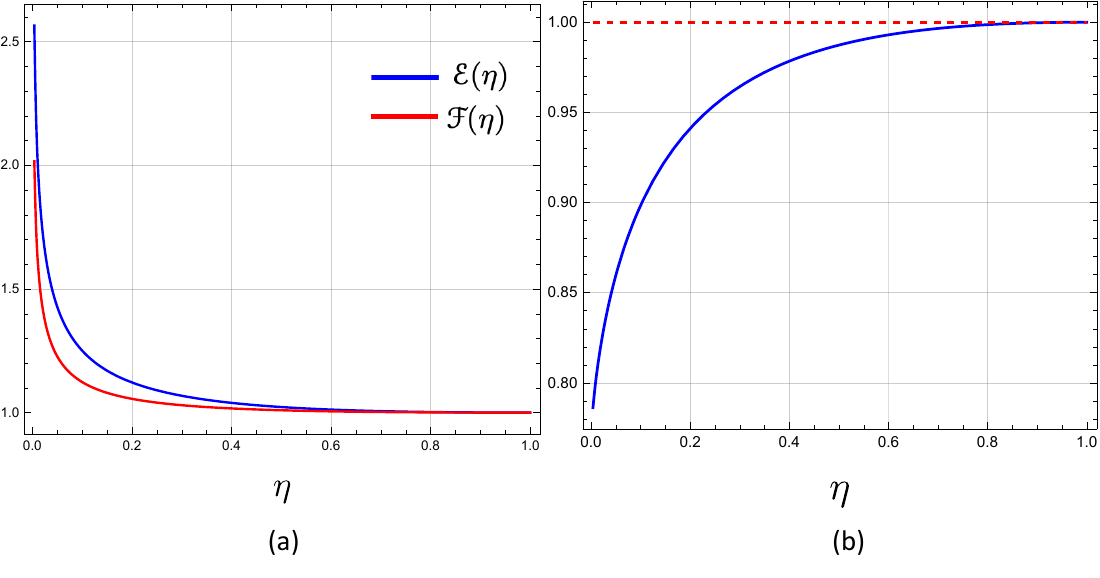}
    \caption{ \textbf{(a)} Plot of $\Esc(\eta)$ and $\Fsc(\eta)$ as defined in Eq.~\ref{eq:DefEandF}. \textbf{(b)} Plot of the ratio. }
    \label{fig:Log}
\end{figure}

\subsection{Polarization dependence of the energy and estimation of valley susceptibility}\label{app:PolarizationDependence}

The ring contribution to the energy for the partially valley-polarized state (pVP) at polarization $\zeta = (n_{X}-n_Y)/\nel$ can be obtained by evaluating (\ref{eq:RPAenergy:app}) using 
\begin{equation}
    \Pi(\ii\Omega, \qbs) =  \frac{m}{2\pi} \left(4 -2 \frac{\Re\sqrt{z_X^2-\tfrac{1+\z}{4}}}{\Re z_X} -2 \frac{\Re\sqrt{z_Y^2-\tfrac{1-\z}{4}}}{\Re z_Y}\right),
\end{equation}
where we used the same notation as App.~\ref{app:Varepsilon:0r}. We denote this contribution by $E^{\text{pVP}}_{\rr}[\zeta;\rs,\h]$. 

To confirm the first-order nature of the transition between the symmetric state to the valley-polarized state in Fig.~\ref{fig:Fig1}, we evaluated 
\begin{equation}
    \frac{E^{\text{pVP}}_{\RPA}[\zeta;\rs,\eta]}{\A} = \frac{1+\z^2}{4\rs^2} - \frac{4\sqrt{2}\Ksc(\h)}{3\p \rs}\left(\left(\frac{1+\z}{2}\right)^{3/2} + \left(\frac{1-\z}{2}\right)^{3/2}\right)  + \frac{E^{\text{pVP}}_{\rr}[\zeta;\rs,\h]}{\A} \quad [\mathrm{Ha}_*]
\end{equation}
where the first term is the kinetic energy, the second term is the Fock energy, and the last term is ring contribution to the energy. 

We evaluated $\frac{E^{\text{pVP}}_{\RPA}[\zeta;\rs,\eta]}{\A}$ for $\zeta^2 = j /8$, with $j=0,1,\dots,8$, and some values of $\eta$ and $\rs$ near the symmetric to valley-polarized transition line in the RPA phase diagram (Fig.~\ref{fig:Fig1}). We found that the state with the lowest energy has $\zeta^2=0$ or $\zeta^2=1$, {i.e.} it is either the symmetric state of the valley-polarized state. We show the $\zeta^2$ dependence of the energy for $\eta=3$ and some values of $\rs$ in Fig.~\ref{fig:polarizationNsusceptibility}.

Additionally, we extracted the valley susceptibility through a linear fit of $\frac{E^{\text{pVP}}_{\RPA}[\zeta;\rs,\eta]}{\A}$ for $0 \leq \zeta^2 \leq 3/8$. We show the inverse susceptibility vs. $r_s$ for $\eta=3,4,5$ in Fig.~\ref{fig:polarizationNsusceptibility}. The valley susceptibility is enhanced upon approaching the transition, but remains finite at the transition point. 

\begin{figure}[h]
    \centering
    \includegraphics[width= 1\textwidth]{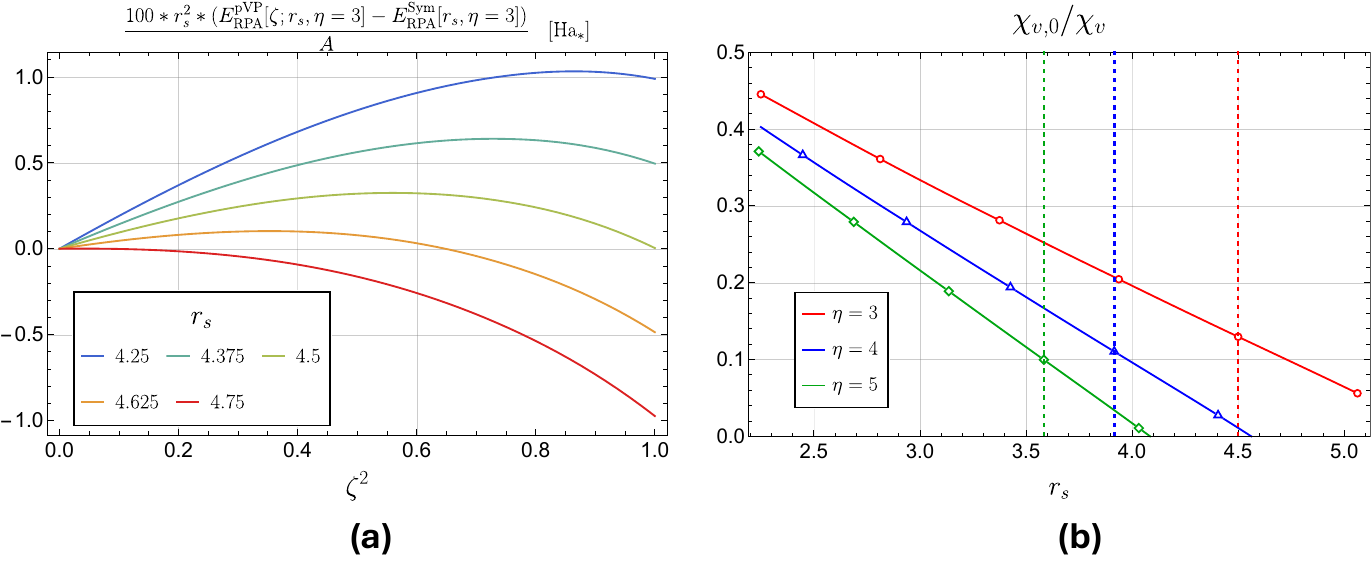}
    \caption{\textbf{(a)} Polarization dependence of energy (within the RPA) at $\eta=3$ near the transition from the symmetric state $(\zeta=0)$ to the valley-polarized state ($\zeta=1$). \textbf{(b)} Dependence of the valley-susceptibility $\chi_{v}$ as a function of polarization by fitting the polarization dependence of the energy. $\chi_{v,0}= \frac{2m}{\pi}$ is non-interacting valley susceptibility. Markers denote values obtained through the fit and lines serve as a guide to the eye. Vertical dashed lines denote the critical $\rs$ at the given anisotropy $\eta$. }
    \label{fig:polarizationNsusceptibility}
\end{figure}

\section{Finite temperature effects}\label{app:TemperatureEffects}

To calculate the finite temperature effect on the phase diagram in Fig.~\ref{fig:Fig1}, we ought to compare the Helmholtz free energy $F$ for the various states. Following previous calculations for the 3DEG \cite{Perrot1984XC3d} and 2DEG \cite{Phatisena1986XC2deg}, we approximate
\begin{equation}
    F(n,T) \approx F_{\RPA}(n,T) = F_0(n,T) + \Omega_{\xr}(\mu_0,T),
\end{equation}
where $n$ is the electron-density, $T$ is the temperature, $F_0$ is the non-interacting Helmholtz free energy, $\Omega_{\xr}(\mu_0,T)$ is the sum of the exchange and ring contributions to the Landau free energy evaluated at the non-interacting chemical potential $\mu_0$. 

Recall that the non-interacting Helmholtz free energy, to leading order in $T$, is 
\begin{equation}
    F_0(n,T)  = E_0 - \frac{mA}{2\p}\frac{\p^2 T^2}{6} \gmf_0.
\end{equation}
$\gmf_0$ is the number of occupied flavors, {e.g.} for the Sym state $\gmf_0=4$; and for VP and SP, $\gmf_0 =2$. In App.~\ref{app:EvaluationOmegaxr}, we show that, to leading order in $T$, the Landau thermodynamic potential is 
\begin{equation}
    \Omega_{\xr}(\mu_0,T) =  E_{\xr} - \frac{m \A}{2\p} \frac{\p^2T^2 }{6} \gmf_{\xr}.
\end{equation}
where $\gmf_{\xr}$ is a function of $\rs$, $\eta$ and state $a \in \StateList$.

Then, to leading order in $T$ the Helmholtz free energy within RPA is 
\begin{equation}
 \frac{F_{}}{\A} =  \frac{E_{}}{\A} - \frac{m}{2\p} \frac{\p^2 T^2}{6}\gmf_{};\quad \gmf_{} \equiv \gmf_{0} + \gmf_{\xr}.
\end{equation}

We interested in the location of the transition between the Sym and VP states. Therefore, we want to evaluate $\gmf_{\text{Sym}}-\gmf_{\text{VP}}$. We show in Fig.~\ref{fig:degeneracyChange} the results for some values of $\eta$ and see that there is not much $\eta$ dependence in the relevant range for our RPA phase diagram.  

\subsection{Leading temperature correction to the exchange-correlation Landau Free energy}\label{app:EvaluationOmegaxr}

In general, we can write the exchange-correlation potential as 
\begin{equation}
    \frac{\Omega_{\xc}[\mu,T]}{\A}= - \frac{1}{2}\int_0^1\dd{\l}\int \frac{\dd^2\qbs}{(2\p)^2} \tilde{v}(\qbs)\left[T\sum_{\Omega_n}\chi_{nn}(\qbs,\ii\Omega_n;\l,\mu,T)- \nel\right].
\end{equation}
By applying Poisson summation to convert the Matsubara sum into integrals and then deforming the integration contour of the oscillatory terms, we obtain
\begin{equation}
    \begin{split}
        \frac{\Omega_{\xc}[\mu,T]}{\A} &= \frac{E_{\xc}[\nel]}{\A}+\varpi_{\xc}^{\mathfrak{I}}[\mu,T] + \varpi_{\xc}^{\mathfrak{R}}[\mu,T];\\
        \varpi_{\xc}^{\mathfrak{I}}[\mu,T]&= -\frac{1}{2}\int_0^1\dd{\l} \int \frac{\dd^2{\qbs}}{(2\p)^2}\tilde{v}(\qbs) \int_{-\infty}^{\infty}\frac{\dd{\eps}}{2\p} \left[\chi_{nn}(\qbs,\ii\eps;\l,\m,T) - \chi_{nn}(\qbs,\ii\eps;\l,\m,T=0)\right] \\ 
        \varpi_{\xc}^{\mathfrak{R}}[\mu,T]&= -\int_0^1\dd{\l} \int \frac{\dd^2{\qbs}}{(2\p)^2} \tilde{v}(\qbs)\int_{0}^{\infty}\frac{\dd{\eps}}{2\p} \frac{\Im\left[\chi_{nn}(\qbs,+\eps+\ii 0^+;\l,\m,T) - \chi_{nn}(\qbs,-\eps+\ii 0^+;\l,\m,T)\right]   }{\ee^{\b\eps}-1}  \end{split}
\end{equation}
Here $\nel = \frac{m}{2\p \b} \sum_{\a} \log(\ee^{\b\m_\a}+1)$.

We now approximate the correlation contributions by the contributions from ring diagrams. We obtain 
\begin{equation}
    \begin{split}
\varpi_{\xc}^{\mathfrak{I}}[\mu,T]\approx \varpi_{\xr}^{\mathfrak{I}}[\mu,T]&= -\int \frac{\dd^2{\qbs}}{(2\p)^2} \int_{0}^{\infty}\frac{\dd{\eps}}{2\p} \log\left[\frac{1+ \tilde{v}(\qbs)\Pi(\ii\eps,\qbs; \m,T)}{1+\tilde{v}(\qbs)\Pi(\ii\eps,\qbs; \m,T=0)}\right];\\ 
\varpi_{\xc}^{\mathfrak{R}}[\mu,T]\approx\varpi_{\xr}^{\mathfrak{R}}[\mu,T]&= -\int \frac{\dd^2{\qbs}}{(2\p)^2} \int_{0}^{\infty}\frac{\dd{\eps}}{2\p} \frac{1}{\ee^{\b\eps}-1}\Dmc_{\xr}(\qbs,\eps; \m,T);\\
\Dmc_{\xr}(\qbs,\eps; \m,T)&= \int_0^1\dd{\l} \Im\left[ \frac{\tilde{v}(\qbs) \Pi(\eps+\ii 0^+,\qbs; \m,T)}{1+ \l \tilde{v}(\qbs)\Pi(\eps+\ii 0^+,\qbs; \m,T)} - (\eps \to -\eps)\right].
        \end{split}
\end{equation}
The contribution from the imaginary axis, $\varpi^{\mathfrak{I}}_{\xr}$, can be calculated based on the behavior of $\Dmc_{\xr}(\qbs,\eps)$ near $\eps=0$, including contributions from both the electron-hole continuum and the plasmon. The contribution from the real axis, $\varpi^{\mathfrak{R}}_{\xr}$, comes from the temperature dependence of the Lindhard function. 
\def\eh{\text{eh}}
\paragraph{Electron-hole continuum contribution:} This contribution to $\Dmc_{\xr}$ from the electron-hole continuum is 
\begin{equation}
    \Dmc_{\eh}(\qbs,\eps;\m) = \arctan\left(\frac{\tilde{v}(\qbs)\Pi''(\eps,\qbs;\m,T)}{1+\tilde{v}(\qbs)\Pi'(\eps,\qbs;\m,T)}\right) -\arctan\left(\frac{\tilde{v}(\qbs)\Pi''(-\eps,\qbs;\m,T)}{1+\tilde{v}(\qbs)\Pi'(-\eps,\qbs;\m,T)}\right).
\end{equation}
Expanding around $\eps=0$, we have
\begin{equation}
   \Dmc_{\eh}(\qbs,\eps;\m) = \Dmc_{\eh}^{(1)}(\qbs;\m)\eps + \Omc(\eps^3);\quad \Dmc_{\eh}^{(1)}(\qbs;\m)\equiv 2\left(\frac{\tilde{v}(\qbs)\partial_{\eps}\Pi''(\eps=0,\qbs;\m,T)}{1+\tilde{v}(\qbs)\Pi'(\qbs,\eps=0;\m,T)}\right).
\end{equation}

For a single isotropic flavor, we get $\partial_\eps\Pi''(q,0) = \frac{m}{2\p}\frac{1}{\sqrt{1- (q/2\kF)^2}} \frac{m}{\kF q} \Theta(1-q/2\kF) = \frac{m}{2\p}\frac{1}{4\mu} \frac{\Theta(1-x)}{x\sqrt{1-x^2}}$, where $x = q/2\kF$. In terms of $\rs = 2/(\kF \aB)$, the leading contribution to $\varpi_{\xr}^{\mathfrak{I}}$ is
\begin{equation}
    \begin{split}
        \varpi_{\xr:\eh}^{\mathfrak{I}}[\m,T]&= -\frac{\kF^2}{4\pi }\frac{4}{\m}\int_0^{1} \frac{\dd{x}}{\sqrt{1-x^2}} \frac{1}{1 + \tfrac{4x}{\rs}}\int_{0}^{\infty}\frac{\dd{\eps}}{2\p} \frac{\eps}{\ee^{\b\eps}-1} = -\frac{m}{2\p} \frac{\pi^2}{6\b^2} \Xmf_0(\rs)\\
        \Xmf_0(\rs)&\equiv\frac{2}{\p}\int_0^{\p/2} \frac{1}{1+\tfrac{4}{\rs}\cos(\q)}\dd{\q} = 
     \begin{cases}
        \frac{\rs}{2\p}\frac{\arctan(\sqrt{(\rs/4)^2-1})}{\sqrt{(\rs/4)^2-1}} \quad &,4<\rs;\\
        \frac{\rs}{2\p}\frac{\arctanh(\sqrt{1-(\rs/4)^2})}{\sqrt{1-(\rs/4)^2}} \quad  &,0<\rs<4.
    \end{cases}.
    \end{split}
\end{equation}

For the states ($a \in \{\text{Sym},\text{SP},\text{VP},\text{SVP}\}$) we care about, we write
\begin{equation}
    \varpi_{\xr:\eh}^{\mathfrak{R}}[\mu,T]_{a} = - \frac{m}{2\p } \frac{\p^2}{6\b^2
} \mathfrak{X}_a(\rs,\h).
\end{equation}
For $a=$ VP or SVP, we obtained
\begin{equation}
    \begin{split}
        \Xmf_{\text{VP}}(\rs,\h) &=  4\int_0^{2\p}\frac{\dd{\q}}{2\p}\int_0^{1/I(\h,\q)}\frac{x\dd{x}}{2\p} \frac{2}{\tfrac{2\sqrt{2}}{\rs} x+2} \frac{1}{xI(\h,\q)}\frac{1}{\sqrt{1-x^2 I(\h,\q)^2}} \\
        \Xmf_{\text{SVP}}(\rs,\h) &=  4\int_0^{2\p}\frac{\dd{\q}}{2\p}\int_0^{1/I(\h,\q)}\frac{x\dd{x}}{2\p} \frac{1}{\tfrac{4}{\rs}x+1} \frac{1}{xI(\h,\q)}\frac{1}{\sqrt{1-x^2 I(\h,\q)^2}}
    \end{split}
\end{equation}
where $I(\h,\q) = \sqrt{\sqrt{\h}\cos(\q)^2+\sin(\q)^2/\sqrt{\h}}$. One can perform a change of variables to show that 
\begin{equation}
   \begin{split}
        \Xmf_{\text{SVP}}(\rs,\h) &= \int_0^{2\p} \frac{\dd{\q}}{2\p}\Xmf_0\left(\frac{\rs}{I(\h,\q)}\right)  = \int_0^{2\p} \frac{\dd{\q}}{2\p}\Xmf_0\left(\rs I(\h,\q)\right) \frac{1}{I(\h,\q)^2};\\
        \Xmf_{\text{VP}}(\rs,\h)  &= \Xmf_{\text{SVP}}(\sqrt{8}\rs,\h). 
   \end{split}
\end{equation}
For the other states, we get a more complicated expression. We can similarly show that $\Xmf_{\text{Sym}}(\rs,\h) = \Xmf_{\text{SP}}(\sqrt{8}\rs,\h)$. Furthermore, $\Xmf_{\text{SP}}(\rs,\h)> \Xmf_{\text{VP}}(\rs,\h)$ because there are regions in the integration domain where the screening from the other valley becomes less effective (the term $\sqrt{1-(2\kF/q)^2}$ in the denominator turns on). Additionally, $\Xmf_{a}(\rs,\h)$ is an increasing function of $\rs$, which follows from $\partial_{\rs}\Xmf_a>0$.

\paragraph{Plasmons:} We follow the discussion of Ref.~\cite{Pines1962Plasmons} to calculate the contribution of the plasmon to the free energy. 
\def\wpl{\omega_{\text{pl}}}
The contribution of the plasmon to $\Omega_{\xr}$ is 
\begin{equation}
    \frac{\Omega_{\text{pl}}}{\A}=  - \int_{\abs{\qbs}<q_c } \frac{\dd^2{\qbs}}{(2\p)^2} \int_{\l_c}^1 \frac{\dd{\l}}{\l} \int_0^{\infty} \frac{\dd{\omega}}{2\p}\coth(\tfrac{\b\w}{2})\Im\left[\frac{1}{\pdv{\kappa'}{\omega}\eval_{\omega = \wpl} \times \left(\omega -\wpl(\qbs;\l) \right)+ \ii  \kappa''}\right],
\end{equation}
where $\l_c$ is a cutoff coupling constant at which the plasmon starts to become damped, $\kappa' = 1 + \l \tilde{v}\Pi'$ and $\kappa'' = \l \tilde{v}\Pi''$ are the real and imaginary part of the dielectric constant. Note that $\wpl$ is the solution to 
\begin{equation}
    \kappa' = 1+ \l\tilde{v}(\qbs)\Pi'(\qbs,\wpl(\l)) =0.
\end{equation}
We then take derivatives with respect to $\l$ to obtain 
\begin{equation}
    \tilde{v}\pdv{\wpl}{\l} \pdv{\Pi' }{\omega}= \frac{1}{\l^2}.
\end{equation}
Furthermore, noting that $\pdv{\k'}{\w} = \l\tilde{v}\pdv{\Pi'}{\omega}$, we can replace the integration over the coupling constant $\l$ by integration over the plasmon frequency to obtain 
\begin{equation}
    \begin{split}
        \frac{\Omega_{\text{pl}}}{\A}
        &= 
     - \int_{\abs{\qbs}<q_c } \frac{\dd^2{\qbs}}{(2\p)^2} \int_{\Omega_c}^{\wpl} {\dd{\Omega}} \int_0^{\infty} \frac{\dd{\omega}}{2\p}\coth(\tfrac{\b\w}{2})\left[-\p \delta(\w - \Omega)\right]\\
     & = 
      \frac{1}{\b}\int_{\abs{\qbs}<q_c } \frac{\dd^2{\qbs}}{(2\p)^2} \left[\log(2\sinh(\tfrac{\b\wpl(\qbs)}{2})) - 
      \log(2\sinh(\tfrac{\b\Omega_{\text{c}}}{2}))\right].
    \end{split}
\end{equation}
Here $\Omega_{\text{c}}$ is a cutoff frequency at which the plasmon starts to be damped. As in Ref.~\cite{Pines1962Plasmons}, we associate the term involving $\Omega_{\text{c}}$ because it is related to electron-hole continuum. The term involving $\wpl$ is the contribution to the thermodynamic potential of a boson with dispersion $\wpl$. Due to the dispersion $\wpl \propto \sqrt{\abs{\qbs}}$, the temperature dependence is going to be of order $T^5$, so it is negligible compared to the electron-hole contribution. 

\paragraph{T-dependence of Lindhard function:}

Finally, the contribution $\varpi^{\mathfrak{R}}_{\xr}$ comes from the temperature dependence of the Lindhard function. To estimate this contribution, we use a Sommerfeld expansion for $\Pi(\qbs,\ii \eps; \m ,T) = \Pi(\ii \eps,\qbs; \m ,T) + \frac{\p^2}{6(\b\m)^2} {\Delta\Pi}(\ii \eps,\qbs; \m) + \Omc(T^4)$ which is valid for $\eps \neq 0$. Then
\begin{equation}
\begin{split}
    \varpi^{\mathfrak{R}}_{\xr} &\approx - \frac{m}{2\p }\frac{\p^2}{6 \b^2}  \Ymf;\\
    \Ymf &= \frac{2\p}{m \mu^2 }\int \frac{\dd^2{\qbs}}{(2\p)^2} \int \frac{\dd{\eps}}{2\p} \frac{\tilde{v}(\qbs){\Delta\Pi}(\ii\eps,\qbs;\m)}{1+\tilde{v}(\qbs) \Pi(\ii\eps,\qbs;\m)}.
\end{split}
\end{equation}
Note that $\Ymf = \Ymf_a(\rs,\h)$ depends on $\rs$, $\h$ and the state $a\in \StateList$.

For a single isotropic flavor, we utilize the change of variables $\abs{\qbs} = 2\kF x$ and $\epsilon = \frac{2\kF^2}{m}xy$ to obtain
\begin{equation}\label{eq:app:Ymf0}
    \Ymf_0(\rs) \equiv \frac{16}{\p}\int_0^{\infty}\dd{x}\int_0^{\infty}\dd{y} x^2  \frac{\Delta\pi(x,y)}{\tfrac{4}{\rs}x+ \pi(x,y)};
\end{equation}
where $\pi(x,y) = 1 - \Re[\sqrt{(x+\ii y)^2-1}]/x$ and $\Delta \pi(x,y) =   \frac{1}{4x} \Re[((x+\ii y)^2-1)^{-3/2}]$. By direct numerical evaluation, we found that  $\Ymf \approx -\rs/\pi$ to leading order in $\rs \ll 1$. 

For the SVP state, we obtain
\begin{equation}
    \Ymf_{\text{SVP}}(\rs,\h) = \int_0^{2\p} \frac{\dd{\q}}{2\p} \frac{\Ymf_{0}(\rs I(\h,\q))}{I(\h,\q)^2}
\end{equation}
We also have $\Ymf_{\text{VP}}(\rs,\h) = \Ymf_{\text{SVP}}(\sqrt{8}\rs,\h)$ and $\Ymf_{\text{Sym}}(\rs,\h) = \Ymf_{\text{SP}}(\sqrt{8}\rs,\h)$.

\paragraph{Final result:}

To summarize our result, the ring exchange contribution to leading order in $T$ for the four states ($a\in \{\text{Sym},\text{SP},\text{VP},\text{SVP}\}$) can be summarized as
\begin{equation}
    \frac{\Omega_{\xr;a}[\m,\h]-E_{\xr;a}[\nel,\h]}{\A} =  -\frac{m}{2\p} \frac{T^2 \p^2}{6}\gmf_{\xr}^{a}(\rs,\h)
\end{equation}
where  $\gmf_{\xr}^{a}\equiv\Xmf_a + \Ymf_a$ is the sum of the real and imaginary contributions. For any of the four states, we found that $\gmf_\xr$ is of order $\rs \log(1/\rs)$ for small $\rs$ but becomes negative for $\rs$ of order 1. 

\begin{figure}[h]
    \centering
    \includegraphics{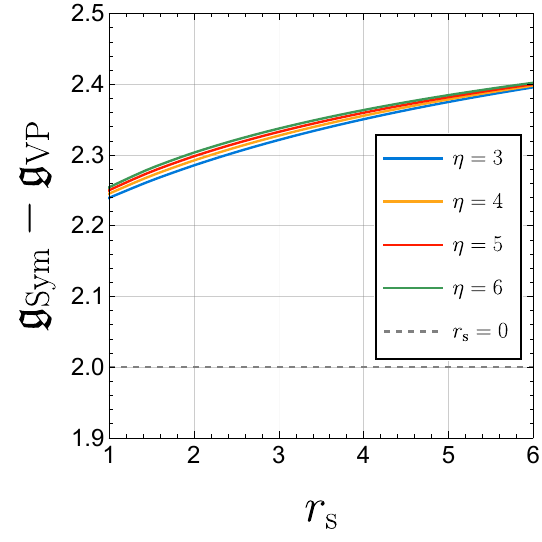}
    \caption{Plot of the coefficient in front of $ \frac{\pi^2 T^2}{6}$ of the difference between the Helmholtz free energy of the symmetric state and valley-polarized state, as a function of $\rs$ at certain values of $\eta$. }
    \label{fig:degeneracyChange}
\end{figure}

\section{Inequalities for ring diagrams}\label{app:IneqRingDiagrams}

\subsection{Leading order ring diagram}\label{app:Er_Leading}

The contribution to the energy of the leading order ring diagram (diagram \textbf{b} of Fig.~\ref{fig:2PT}) is 
\begin{equation}\label{eq:ringDiagram:App}
    \frac{E_{\textbf{b}}}{\A}= -\frac{1}{2}\int_{\RR^2}\frac{\dd^2{\qbs}}{(2\pi)} \int_0^{\infty}\frac{\dd{\Omega}}{2\pi} [\tilde{v}(\qbs)]^2 [\Pi(\ii\Omega,\qbs)]^2.
\end{equation}

We want to show that $\frac{E_{\textbf{b}}}{\A}$ can be thought of as the magnitude of a vector. To do so, we define a variable $\bar{q} \equiv (\qbs,\Omega)$ that lives on the upper-half hyperplane $(\HH^3\equiv \RR_+\times \RR^2)$. Then, we define an inner-product between two real functions $f_{1/2}:\HH^3\to \RR $ as 
\begin{equation}
    \expval{f_1,f_2} \equiv \int_{\HH^3}f_1(u)f_2(u)\dd{\upmu_u}
\end{equation}
where $\dd{\upmu_u} = \frac{\dd^2\qbs}{(2\pi)}\frac{\dd{\Omega}}{2\pi}$ is the integration measure. This is clearly a generalization of the standard dot product of $\RR^n$ where the indices now correspond to points in $\HH^3$.

We define $\Qmc(u) \equiv  \tilde{v}(\qbs) \Pi(\qbs,\ii\Omega)$, where $u = (\qbs,\Omega)$. We can then rewrite Eq.~\ref{eq:ringDiagram:App} as
\begin{equation}
    \frac{E_{\textbf{b}}}{\A} = -\frac{1}{2}\expval{\Qmc,\Qmc}.
\end{equation}
Consider now the partially polarized spin and partially polarized valley states at the same polarization $\zeta_v=\zeta_s = \zeta$. The Lindhard function $\Pi(\ii\Omega,\qbs)$ for the partially valley and spin polarized states (as defined in Tab.~\ref{tab:DefinitionStates}) can be written as 
\begin{equation}
    \begin{split}
        \Pi_{\text{pVP}}(\ii\Omega,\qbs;\zeta) &= 2\pi_0(\qbs,\Omega;\zeta)\\
        \Pi_{\text{pSP}}(\ii\Omega,\qbs; \zeta) &= \pi_0(\qbs,\Omega;\zeta) + \pi_0(C_4\qbs,\Omega;\zeta)
    \end{split}
\end{equation}
where 
\begin{equation}
    \begin{split}
        \pi_0(\qbs,\Omega;\zeta) 
        &= \frac{m}{2\pi} \left(1- \Re[\sqrt{(1+ \ii\Omega /\eps_1(\qbs) )^2 - (2\kF^2/m\eps_{1}(\qbs)) \frac{(1+\zeta)}{2}}]\right) \\
        &\quad + \frac{m}{2\pi} \left(1- \Re[\sqrt{(1+ \ii\Omega /\eps_3(\qbs) )^2 - (2\kF^2/m\eps_{3}(\qbs)) \frac{(1-\zeta)}{2}}]\right)
    \end{split}
\end{equation}
is the polarization bubble from a Slater determinant of the $\a=1,3$ states at densities $n_{\a=1} =\frac{\kF^2}{4\p}\frac{(1+\zeta)}{2}$ and $n_{\a=3} =\frac{\kF^2}{4\p}\frac{(1-\zeta)}{2}$. Recall that $\eps_1(\qbs) = \frac{q_x^2/\sqrt{\h}+ q_y^2\sqrt{\h}}{2m}$ and $\eps_3(\qbs) = \eps_3(C_4\qbs)$. 

Let $\Qmc_0(\qbs,\Omega;\zeta)= \tilde{v}(\qbs)\pi_0(\qbs,\Omega; \zeta)$, then 
\begin{equation}
    0< \expval{\Qmc_0 -C_4\Qmc_0,\Qmc_0 -C_4\Qmc_0} = 2\left(\expval{\Qmc_0,\Qmc_0}-\expval{\Qmc_0,C_4\Qmc_0}\right)
\end{equation}
as long as $\Qmc_0\neq C_4\Qmc_0$, where$(C_4\Qmc_0)(\qbs,\Omega;\zeta)\equiv \Qmc_0(C_4\qbs,\Omega;\zeta)$, which is true for generic $\eta\neq 1$ and $\tilde{v}$ not a constant. 

On the other hand 
\begin{equation}
    \begin{split}
        \frac{E_{\textbf{b}}^{\text{pSP}}}{\A} &= -\frac{1}{2} \expval{\Qmc_0 + C_4\Qmc_0,\Qmc_0 + C_4\Qmc_0}\\
        &= -\left(\expval{\Qmc_0,\Qmc_0}+ \expval{\Qmc_0,C_4\Qmc_0}\right)\\
        \frac{E_{\textbf{b}}^{\text{pVP}}}{\A} &= -2\expval{\Qmc_0 ,\Qmc_0}\\
        \Rightarrow\frac{E_{\textbf{b}}^{\text{pSP}}-E_{\textbf{b}}^{\text{pVP}}}{\A} &= \expval{\Qmc_0,\Qmc_0}-\expval{\Qmc_0,C_4\Qmc_0} >0
    \end{split}
\end{equation}

\subsection{Random Phase Approximation (ring diagrams)}\label{app:Er_RPA}

We rephrase Theorem \ref{theorem:Valley} here:
\begin{theorem}\label{theorem:Valley:app}
    The RPA energy correlation energy (defined in Eq.~\ref{eq:RPAenergy}) for the pVP and pSP states at the same polarizion $\zeta_v=\zeta_s=\zeta$ satisfy
    \begin{equation}
        \frac{E^{\mathrm{pVP}}_{\rr}}{\Area} < \frac{E^{\mathrm{pSP}}_{\rr}}{\Area}
    \end{equation}
    as long as $\tilde{v}(\qbs) = \tilde{v}(C_4\qbs)$, $\tilde{v}(\qbs)>0$, $\eta \neq 1$.
\end{theorem}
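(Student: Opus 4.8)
The plan is to reduce the infinite-order RPA comparison to the pointwise concavity of the logarithm, mirroring the second-order argument of App.~\ref{app:Er_Leading}. First I would import the decomposition established there: at a common polarization $\zeta$ the Lindhard functions read $\Pi_{\mathrm{pVP}}(\ii\Omega,\qbs)=2\pi_0(\qbs,\Omega)$ and $\Pi_{\mathrm{pSP}}(\ii\Omega,\qbs)=\pi_0(\qbs,\Omega)+\pi_0(C_4\qbs,\Omega)$, where $\pi_0\ge 0$ on the imaginary-frequency axis (each term in Eq.~\ref{eq:Lindhard} is manifestly nonnegative). Writing $a\equiv\tilde v(\qbs)\pi_0(\qbs,\Omega)$ and $b\equiv\tilde v(\qbs)\pi_0(C_4\qbs,\Omega)$, both nonnegative since $\tilde v>0$, I would substitute these into Eq.~\ref{eq:RPAenergy} for each state and form the difference $E_{\rr}^{\mathrm{pSP}}-E_{\rr}^{\mathrm{pVP}}$.

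The second (easy) step disposes of the linear $-\tilde v\Pi$ pieces. Because $\tilde v$ is $C_4$ symmetric and the measure of Eq.~\ref{eq:RPAenergy} is rotation invariant, the change of variables $\qbs\to C_4\qbs$ gives $\int b=\int a$. Hence $\int\tilde v\,\Pi_{\mathrm{pSP}}=\int(a+b)=\int 2a=\int\tilde v\,\Pi_{\mathrm{pVP}}$, the linear contributions cancel, and only the logarithmic terms survive,
\begin{equation*}
\frac{E_{\rr}^{\mathrm{pSP}}-E_{\rr}^{\mathrm{pVP}}}{\Area}=\int\Big\{\log[1+a+b]-\log[1+2a]\Big\}.
\end{equation*}
Applying the same change of variables to the $\log[1+2a]$ term sends $a\to b$, so I can symmetrize and replace $\log[1+2a]$ by $\tfrac12\log[1+2a]+\tfrac12\log[1+2b]$.

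The inequality then becomes pointwise. I would invoke concavity of the logarithm (Jensen with equal weights on $1+2a$ and $1+2b$),
\begin{equation*}
\tfrac12\log(1+2a)+\tfrac12\log(1+2b)\le\log\!\big(\tfrac12(1+2a)+\tfrac12(1+2b)\big)=\log(1+a+b),
\end{equation*}
which is equivalent to $(1+a+b)^2\ge(1+2a)(1+2b)$ since their difference is $(a-b)^2\ge 0$. Thus the integrand is nonnegative everywhere, giving the weak inequality at once (and, through the expansion used in the corollary, $\chi_{\ss}\le\chi_{\vv}$).

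The main obstacle is upgrading this to the strict inequality. I must show $a\neq b$, i.e. $\pi_0(\qbs,\Omega)\neq\pi_0(C_4\qbs,\Omega)$, on a set of positive measure. For $\zeta\neq 0$ this holds because $\pi_0$ is a sum of two bubbles carrying anisotropies $\eta$ and $1/\eta$ with unequal densities $\tfrac{1\pm\zeta}{4}$; the $C_4$ rotation interchanges the two anisotropies but not the attached densities, so $\pi_0$ fails to be $C_4$ symmetric precisely when $\eta\neq 1$ (at $\zeta=0$ the two states coincide and equality is expected). Turning ``$\pi_0$ is not $C_4$ symmetric'' into strictness on a positive-measure set, rather than a merely generic statement, is the one place demanding care; I would argue it by examining the bubbles in a regime where the density asymmetry is transparent, e.g. along fixed directions or in the small-$\qbs$ limit, where $\pi_0(\qbs)$ and $\pi_0(C_4\qbs)$ are seen to differ on an open set.
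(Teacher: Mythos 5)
Your proof is correct and is essentially the paper's own argument (App.~\ref{app:Er_RPA}): the same decomposition $\tilde{v}\Pi_{\mathrm{pVP}}=2\Qmc_0$, $\tilde{v}\Pi_{\mathrm{pSP}}=\Qmc_0+C_4\Qmc_0$, the same $C_4$ change of variables exploiting invariance of the measure, and your equal-weight Jensen step for the logarithm is identical to Lemma~\ref{Lemma:inequality} applied to the convex function $f(x)=x-\log(1+x)$, whose second clause also formalizes the positive-measure strictness argument you sketch (and you are right that strictness additionally requires $\zeta\neq 0$, a point the paper leaves implicit). The one caution concerns ordering: since $\int\tilde{v}\,\Pi$ and $\int\log(1+\tilde{v}\,\Pi)$ are separately UV-divergent, you should first symmetrize the full integrand $\log(1+\tilde{v}\Pi)-\tilde{v}\Pi$ under $\qbs\to C_4\qbs$ and only then cancel the linear pieces, which then cancel pointwise (this is what the paper achieves by keeping $f$ intact), rather than cancelling the divergent linear integrals on their own; your final symmetrized pointwise inequality is exactly right once stated in that order.
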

\begin{proof}
Using the same notation as in App.~\ref{app:Er_Leading}, we have 
\begin{equation}
\begin{split}
    \frac{E^{\mathrm{pSP}}_{\rr}}{\A}&= -\int_{\HH^3} f( \Qmc_0+C_4\Qmc_0)\dd{\upmu}\\
    \frac{E^{\mathrm{pVP}}_{\rr}}{\A}&= -\int_{\HH^3} f( \Qmc_0+\Qmc_0)\dd{\upmu}\\
\end{split}
\end{equation}
where $f(x) = x-\log(1+x)$ is a convex function on the interval $[0,+\infty]$. Notice that $\Qmc_0$ and $C_4\Qmc_0$ only take values on $[0,+\infty]$ and are implicitly functions of $\zeta$. We can thus use Lemma~\ref{Lemma:inequality} with $Z_1 = \Qmc_0$ and $Z_2=C_4\Qmc_0$ to obtain
\begin{equation}
\int_{\HH^3}f(\Qmc_0+C_4\Qmc_0)\dd{\upmu} \leq \int_{\HH^3}f(2\Qmc_0)\dd{\upmu} \Longrightarrow \frac{E^{\mathrm{pSP}}_{\rr}}{\A} \geq \frac{E^{\mathrm{pVP}}_{\rr}}{\A}.
\end{equation}

We move to the strict inequality. For this, notice that $Z_1, Z_2$ and $f$ are continuous functions. It is clear from the different dispersion of the $X$ and $Y$ valleys that (as long as $\eta \neq 1$), there is a compact region of finite volume in $\HH^3$ such that $Z_1 \neq Z_2$. Let this region be $R$. Let $\delta $ be the smallest value of $j =F \circ(Z_1,Z_2)$ in $R$, then $\upmu(J_{\delta}) \geq \upmu(R)>0$. Then using the second part of the lemma, we get the strict inequality. 
\end{proof}

\subsection{Useful Lemma}\label{app:Lemma}

We state a useful lemma in terms of measure theory. For people not familiar with this, one can restrict $M$ to be $\RR_{+}\times \RR^2$ (frequency and momenta) and interpret the integrals as usual Riemann integration with the replacement $\dd{\mu} = \dd{q_0}\dd{q_1}\dd{q_2}$. We decided to present the lemma in this way because it will applicable to finite temperature where one should take $M= \ZZ\times \RR^2$ where the first factor is a discrete set corresponding to Matsubara frequencies. 
\begin{lemma}\label{Lemma:inequality}
    Fix a topological measure space $(M,\dd\upmu)$ and a pair of integrable functions $Z_1,Z_2: M \to [0,\infty)$. Assume further that the measure is invariant under some map $g:M\to M$ and that $Z_1\circ g = Z_2$. Let $f: [0,\infty) \to \RR$  be convex function. Then 
    \begin{equation}\label{eq:lemma0}
        \int_{M} f(Z_1+Z_2)\dd\upmu \leq  \int_{M} f(2Z_1)\dd\upmu
    \end{equation}
    assuming that both integrals exist. Furthermore, if $f\geq 0$ and $\upmu(J_{\delta})>0$ for some $\delta>0$ with $J_{\delta}=\{m\in M | f(2Z_1(m))+f(2Z_2(m))-2f(Z_1(m)+Z_2(m)) \geq \delta\}$, then \ref{eq:lemma0} is a strict inequality. 
\end{lemma}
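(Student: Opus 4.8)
The plan is to reduce the statement to the elementary pointwise consequence of convexity and then to exploit the $g$-invariance of $\upmu$ to symmetrize the two terms. First I would apply convexity of $f$ at the midpoint of the two points $2Z_1(m)$ and $2Z_2(m)$: for every $m\in M$,
\[
f\big(Z_1(m)+Z_2(m)\big)=f\!\left(\tfrac12\,2Z_1(m)+\tfrac12\,2Z_2(m)\right)\le \tfrac12 f\big(2Z_1(m)\big)+\tfrac12 f\big(2Z_2(m)\big).
\]
Integrating this inequality against $\dd{\upmu}$ gives
\[
\int_M f(Z_1+Z_2)\,\dd{\upmu}\le \tfrac12\int_M f(2Z_1)\,\dd{\upmu}+\tfrac12\int_M f(2Z_2)\,\dd{\upmu}.
\]

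The decisive second step is to collapse the right-hand side using the symmetry hypothesis. Since $2Z_2=(2Z_1)\circ g$ and $\upmu$ is $g$-invariant, the change-of-variables identity $\int_M (h\circ g)\,\dd{\upmu}=\int_M h\,\dd{\upmu}$, applied with $h(m)=f(2Z_1(m))$, yields $\int_M f(2Z_2)\,\dd{\upmu}=\int_M f(2Z_1)\,\dd{\upmu}$. Substituting this into the bound above makes the two terms on the right identical, which establishes the weak inequality \eqref{eq:lemma0}.

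For the strict inequality I would track the pointwise slack $\Delta(m):=\tfrac12 f(2Z_1(m))+\tfrac12 f(2Z_2(m))-f(Z_1(m)+Z_2(m))$, which is nonnegative everywhere by the convexity step. The same symmetrization shows that $\int_M \Delta\,\dd{\upmu}=\int_M f(2Z_1)\,\dd{\upmu}-\int_M f(Z_1+Z_2)\,\dd{\upmu}$, so it suffices to prove $\int_M\Delta\,\dd{\upmu}>0$. By construction $2\Delta$ is exactly the integrand defining $J_\delta$, so $\Delta\ge \delta/2$ on $J_\delta$ and $\Delta\ge 0$ on its complement; discarding the complement then gives $\int_M \Delta\,\dd{\upmu}\ge \tfrac{\delta}{2}\,\upmu(J_\delta)>0$, which is the strict statement.

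The only genuine care needed is analytic rather than conceptual: every rearrangement above rewrites a difference of integrals, so I must rule out any indeterminate $\infty-\infty$. This is precisely where the standing hypotheses do their work — integrability of $Z_1,Z_2$ (together with existence of the integrals, assumed in the statement) guarantees the individual integrals are finite so the invariance substitution and the additive splitting are legitimate, while the condition $f\ge 0$ ensures that the nonnegative integrand $\Delta$ always has a well-defined integral bounded below as claimed. I expect this integrability bookkeeping, together with stating $g$-invariance cleanly as $g_\ast\upmu=\upmu$, to be the main (and rather modest) obstacle; the inequality itself is an immediate consequence of convexity plus symmetry.
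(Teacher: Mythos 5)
Your proof is correct and follows essentially the same route as the paper: pointwise midpoint convexity (your slack $\Delta$ is exactly the paper's auxiliary function $F\circ(Z_1,Z_2)$), symmetrization via $g$-invariance of $\upmu$ to identify $\int_M f(2Z_2)\,\dd\upmu$ with $\int_M f(2Z_1)\,\dd\upmu$, and a lower bound $\delta\,\upmu(J_\delta)/2$ on the integrated slack for strictness. If anything, your bookkeeping of the factor of $2$ in the strict-inequality bound is slightly cleaner than the paper's.
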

\begin{proof}
    Let $\RR_+ = [0,\infty)$ and define $F: \RR_+\times\RR_+ \to \RR$ by the formula $F(x,y) = \frac{f(2x)+f(2y) - 2f(x+y)}{2}$. By the convexity of $f$, we have $F\geq 0$. Then if we define $j: M \to \RR$ as $j= F\circ (Z_1,Z_2)$, we get $\int_M j \dd{\upmu} \geq 0$. Expanding out the definition of $j$ we get 
    \begin{equation}\label{eq:Lemma1}
        \int_M \left(f(2Z_1)+f(2Z_2)-2f(Z_1+Z_2)\right) \dd{\upmu} \geq 0. 
    \end{equation}
     Next, using the invariance of the measure under $g$, we have $\int_{M}f(2Z_2) \dd{\upmu}=\int_{M}f\circ (2 Z_1) \circ g\dd{\upmu}=\int_{M}f(2Z_1) \dd{\upmu}$. 

     Therefore, we can write Eq.~\ref{eq:Lemma1} as 
     \begin{equation}\label{eq:Lemma2}
        \int_M \left(f(2Z_1)-f(Z_1+Z_2)\right) \dd{\upmu} =\frac{1}{2}\int_M j \dd{\upmu} \geq  0 .
    \end{equation}

    Using the second condition, we can write $\int_M j \dd{\upmu} \geq \int_{J_{\delta}} \delta \dd{\upmu}  = \delta \,\,\upmu(J_{\delta})>0$ which proves the strict inequality version of Eq.~\ref{eq:lemma0}.
\end{proof}

\section{Useful integrals}
This appendix contains various useful integrals used in other appendices. The following two formulas are well-known
\begin{equation}\label{eq:Integral1}
    \int_{0}^{\infty} \frac{\sin(ax)\sin(bx)}{x}\dd{x} = \frac{1}{2} \log\abs{\frac{a+b}{a-b}}
\end{equation}
\begin{equation}\label{eq:Integral2}
        \int_{0}^{\infty} \frac{\dd{\s}}{\s}J_1(a \s)J_1(b \s) = \frac{1}{2} \frac{\min(a,b)}{\max(a,b)}; a,b>0
\end{equation}

For the next formula, starting in Equation \textbf{2.261} of Ref.~\cite{gradshteyn2014table},replace $R= \abss^2 +2\abss\cdot \bbs x+ \bbs^2 x^2$: 
\begin{equation}\label{eq:Integral3}
    \int \frac{\dd{\xi}}{\abs{\abss + \bbs \xi}} = 
    \frac{1}{\abs{\bbs}} \arcsinh\left(\frac{\bbs\cdot(\abss + \xi \bbs)}{\abs{\bbs}\abs{\abss\times\bbs}}\right) = \frac{1}{\abs{\bbs}} \log\left(\frac{\abs{\bbs}\abs{\abss+\xi \bbs} +\bbs\cdot(\abss+\xi \bbs)}{\abs{\abss\times\bbs}}\right); [\abss,\bbs\in \RR^3, \abss\times \bbs \neq 0]
\end{equation}

The following indefinite integral was found by massaging Mathematica's output:
\begin{equation}\label{eq:Integral4}
    \int \frac{\dd{\xi}}{1+\xi^2} \frac{1}{\sqrt{a+2b\xi + c \xi^2}} 
    = \frac{1}{2} \left( \frac{\arctanh \left(\frac{\sqrt{-a+\ii 2b +c}\sqrt{a+2b \xi +c \xi^2}}{\ii a+ b +\xi(\ii b+ c)}\right)}{\sqrt{-a+\ii 2b +c}} +( \ii \to -\ii) \right)
\end{equation}

Let $\abss,\bbs\in \RR^n$. Setting $c =1 + \bbs^2$, $b= \abss\cdot \bbs$ and $a= \abss^2$ in Eq.~\ref{eq:Integral4}, we arrive at following indefinite integral
\begin{equation}\label{eq:Integral5}
    \int \frac{\dd{\xi}}{1+\xi^2} \frac{1}{\sqrt{\xi^2+(\abss + \bbs \xi)^2}} 
      = \frac{1}{2} \left( \frac{\arctanh \left(\frac{\sqrt{1+(\ii\abss  +\bbs)^2}\sqrt{\xi^2+(\abss + \bbs \xi)^2}}{\abss\cdot (\ii\abss+ \bbs) + \xi (1+ \bbs\cdot (\bbs + \ii\abss))}\right)}{\sqrt{1+(\ii\abss  +\bbs)^2}} +( \ii \to -\ii) \right).
\end{equation}
Using Eq.~\ref{eq:Integral5}, we can obtain the following definite integral 
\begin{equation}\label{eq:Integral6}
    \int^{+\infty}_{-\infty} \frac{\dd{\xi}}{1+\xi^2} \frac{1}{\sqrt{\xi^2+(\abss + \bbs \xi)^2}} 
      = \left( \frac{\arctanh \left(\frac{\sqrt{1+(\ii\abss  +\bbs)^2}\sqrt{1+\bbs^2}}{ (1+ \bbs\cdot (\bbs + \ii\abss))}\right)}{\sqrt{1+(\ii\abss  +\bbs)^2}} +( \ii \to -\ii) \right); \abss,\bbs\in \RR^n.
\end{equation}

\section{Evaluation of susceptibility for T-matrix}\label{app:Tmatrix}

We present two different calculations of the susceptibility appearing in the T-matrix. 
\def\knot{k_{\rm{o}}}
\subsection{Hard UV cutoff}
We evaluate the particle-particle susceptibility referred in the main text. The definition is given by
\begin{equation}
    \Gamma_{\a \a'}= \int _{R_{\a\a'}}\frac{\dd^2\kbs}{(2\pi)^2} \int_{-\infty}^{+\infty} \frac{\dd\omega}{2\pi} \frac{1}{+\ii \omega-\e_{\a,+\kbs}} \frac{1}{-\ii \omega-\e_{\a',-\kbs}}
\end{equation}
where $R_{\a\a'}$ is a region of $\RR^2$ such that $\eF\lesssim \e_{\a,\kbs},\e_{\a',\kbs}\lesssim W$, and $\varepsilon_{\a,\kbs}\equiv \frac{\kbs\cdot W_\a \cdot \kbs}{2m}$ where $W_\a$ is the anisotropy matrix ($W_\a = \diag(1/\sqrt{\h_\a},\sqrt{\h_\a})$). For simplicity, we approximate $R_{\a\a'}$ by the region $2\eF<\e_{\alpha,\kbs} +\e_{\alpha',\kbs}<2W$.

Performing the integral over $\omega$ gives 
\begin{equation}
    \Gamma_{\a \a'}= \int _{R_{\a\a'}}\frac{\dd^2\kbs}{(2\pi)^2}\frac{1}{\e_{\a,+\kbs}+\e_{\a',-\kbs}}.
\end{equation}
We now distinguish two cases. First, when $W_{\a} = W_{\a'}$, $\e_{\a',-\kbs}= \e_{\a,\kbs}$, the integral is 
\begin{equation}\label{eq:Intravalley}
    \Gamma_{\a \a'}= \frac{m}{4\pi}\log(W/\eF); \quad [\t_\a = \t_{\a'}].
\end{equation}
For the case where $\t_\a =X$ and $\t_{\a'}=Y$, the denominator becomes $\frac{\kbs^2}{2m}(\sqrt{\eta}+1/\sqrt{\eta}) $. 
\begin{equation}\label{eq:Intervalley}
    \Gamma_{\a \a'}= \frac{m}{4\pi}\log(W/\eF)\times \frac{2}{\eta^{+1/2}+\eta^{-1/2}}; \quad [\t_{\a}\neq \t_{\a'}].
\end{equation}

We can write Eqs.~\ref{eq:Intravalley} and \ref{eq:Intervalley} compactly as 
\begin{equation}
    \Gamma_{\a\a'} = \frac{1}{2\pi} \frac{1}{\sqrt{\det(W_{\a}+W_{\a'})}} \log(W/\eF).
\end{equation}
This tell us that the correct mass in the prefactor of $\Gamma$ is the reduced mass.

\subsection{Soft UV cutoff }

An alternative way to calculate $\Gamma_{\a\a'}$ is to impose a soft IR cutoff by modifying the large momentum dependence of the dispersion:
\begin{equation}
    \begin{split}
        \varepsilon_{X,\kbs} &= \frac{k^2}{2m}\left( \cosh(\s) - \sinh(\s)\cos(2\q) +  \frac{k^2}{\Lambda^2}\right)\\
        \varepsilon_{Y,\kbs} &= \frac{k^2}{2m}\left(  \cosh(\s) + \sinh(\s)\cos(2\q)+  \frac{k^2}{\Lambda^2}\right)
    \end{split}
\end{equation}
where $\h=\ee^{2\s}$ and $\L$ is the UV cutoff. 

We calculate the susceptibility by imposing an (isotropic) hard IR cutoff $\knot$. The intra-valley susceptibility is
\begin{equation}
    \begin{split}
        \Gamma_{XX} 
        &= \int_{\knot}^{\infty} \frac{k\dd{k}}{2\pi} \int_0^{2\pi}\frac{\dd{\q}}{2\pi}\frac{1}{\frac{2k^2}{2m}\left[ \cosh(\s) + \sinh(\s) \cos(2\q)  + k^2/\L^2\right]}\\
        &= \frac{m}{2\pi} \int_{\knot}^{\infty}\frac{\dd{k}}{k} \int_0^{2\p} \frac{\dd{\q}}{2\pi} \frac{1}{\cosh(\s) + \sinh(\s) \cos(2\q) + k^2/\L^2 }\\
        &= \frac{m}{2\pi} \int_{\knot}^{\infty}\frac{\dd{k}}{k} \frac{1}{\sqrt{[\cosh(\s)+k^2/\L^2 ]^2 - \sinh(\s)^2 }}\\
        & \sim \frac{m}{2\pi} \int_{\knot}^{\infty}\frac{\dd{k}}{k} \frac{1}{1+\cosh(\s)k^2/\L^2} \sim \frac{m}{4\pi} \log[\L^2/\knot^2 \cosh(\s)].
    \end{split}
\end{equation}
Similarly, the inter-valley susceptibility is
\begin{equation}
    \begin{split}
        \Gamma_{XY} 
        &= \int_{\knot}^{\infty} \frac{k\dd{k}}{2\pi} \int_0^{2\pi}\frac{\dd{\q}}{2\pi}\frac{1}{\frac{2k^2}{2m}\left[ \cosh(\s)  + k^2/\L^2\right]} \sim \frac{m}{4\pi \cosh(\s)} \log[\cosh(\s)\L^2/\knot^2 ].
    \end{split}
\end{equation}
We thus see that the prefactor of $\log(\L^2/\knot^2)$ for $\Gamma_{XX}$ and $\Gamma_{XY}$ is the same as in the calculation with a hard UV cutoff. 

\subsubsection{Trigonal warping}\label{app:TrigonalWarping}

The soft UV cutoff allows us to include the effect of trigonal warping. Consider a system with valleys at the $\Kbs$ and $-\Kbs$ points (denoted as $+$ and $-$, respectively). The dispersion relations we consider are 
\begin{equation}
    \varepsilon_{\pbs,\pm} = \frac{p^2}{2m} \left[1 \pm t p \cos(3\q) + \frac{p^2}{\L^2}\right],
\end{equation}
where $t$ is the trigonal warping strength and $\L$ is the UV cutoff. 

The intervalley and intravalley susceptibilies are
\begin{equation}
    \begin{split}
        \Gamma_{++} &= \int_{\knot}^{\infty} \frac{k\dd{k}}{2\pi} \frac{\dd{\q}}{2\p} \frac{1}{\varepsilon_{\kbs,+} + \varepsilon_{-\kbs,+}}\\
        \Gamma_{+-} &= \int_{\knot}^{\infty} \frac{k\dd{k}}{2\pi} \frac{\dd{\q}}{2\p} \frac{1}{\varepsilon_{\kbs,+} + \varepsilon_{-\kbs,-}}
    \end{split}
\end{equation}

We first evaluate the intervalley susceptibility:
\begin{equation}
    \Gamma_{++} = m \int_{\knot}^{\infty} \frac{k\dd{k}}{2\pi} \frac{1}{k^2(1+k^2/\L^2)} = \frac{m}{4\pi} \log(1+ [\L/ \knot]^2) \sim \frac{m}{4\pi} \log[ W/\eF]
\end{equation}
where the band width $W \sim \L^2/2m$ and $\eF \sim \knot^2/2m$ and we are in the limit $\L/\knot \gg 1$.

Similarly, the intravalley susceptibility is
\begin{equation}
    \Gamma_{+-} = m \int_{\knot}^{\infty} \frac{k\dd{k}}{2\pi} \frac{1}{k^2\sqrt{(1+k^2/\L^2)^2- (t k )^2}}.
\end{equation}
We assume that $\knot, \Lambda$ and $t$ are such that $\Gamma_{+-}$ is a real number, which is true when $t \knot$ is small. As we are interested in the small momentum behaviour, we expand the square root for small $k$:
\begin{equation}
    \Gamma_{+-} = \frac{m}{2\pi} \int_{\knot}^{\infty}\frac{\dd{k}}{k} \frac{1}{(1+k^2(1/\L^2-t^2/2) ) } \sim \frac{m}{4\pi}\log([\tilde{\L}/\knot]^2)
\end{equation}
with $\tilde{\L}^2 = \frac{\L^2}{1- \L^2t^2/2}$. 

Therefore
\begin{equation}
    \Gamma_{++} - \Gamma_{+-} \sim \frac{m}{2\pi} \log(\frac{\L}{\tilde{\L}}) = \frac{m}{4\pi}\log( 1 - (\L t)^2/2) \sim \mathcal{O}([\log(\L/\knot)]^0).
\end{equation}
Even though they are equal to the leading order in the small parameter, their difference is negative
\begin{equation}
    \Gamma_{++} - \Gamma_{+-} 
    =
    \frac{m}{2\pi}\int_{\knot}^{\infty} \frac{\dd{k}}{k (1+ k^2/\L^2)} \left[1- \frac{1}{\sqrt{1- (tk/(1+k^2/\L^2))^2}}\right]<0.
\end{equation}

\end{document}